\renewcommand{\paragraph}{\subsubsection*}
\providecommand{\urlstyle}[1]{}
\providecommand{\doi}[1]{\href{http://dx.doi.org/#1}{\nolinkurl{doi:#1}}}
\newtheorem{definition}{Definition}
\newtheorem{example}[definition]{Example}
\newtheorem{theorem}{Theorem}
\newtheorem{proposition}[theorem]{Proposition}
\newtheorem{conjecture}[theorem]{Conjecture}
\newcommand{\NN}{\mathbb{N}}
\newcommand{\ZZ}{\mathbb{Z}}
\newcommand{\QQ}{\mathbb{Q}}
\newcommand{\RR}{\mathbb{R}}
\newcommand{\QQbar}{\overline{\mathbb{Q}}}
\newcommand{\CC}{\mathbb{C}}
\newcommand{\rem}{\mathrm{rem}}
\newcommand{\As}{A}
\newcommand{\Bs}{B}
\newcommand{\Supp}{\mathrm{Supp}}
\begin{document}

\title[MP for Hypergeometric Sequences with Rational Parameters]{The Membership Problem for Hypergeometric Sequences with Rational Parameters}
\author{Klara Nosan,$^1$ Amaury Pouly,$^{1,2}$, \mbox{Mahsa
  Shirmohammadi,$^1$} \and \mbox{James Worrell$^2$}}
\address{\begin{minipage}{\textwidth}
$^1$~Universit\'e Paris Cité, CNRS, IRIF, F-75013 Paris, France\\
$^2$~Department of Computer Science, University of Oxford, UK\end{minipage}}
\date{}

%
%
%
%

\begin{abstract}
We investigate the Membership Problem for hypergeometric sequences: given a hypergeometric sequence 
$\langle u_n \rangle_{n=0}^\infty$ of rational numbers and a target $t \in \mathbb{Q}$, decide whether 
$t$ occurs in the sequence.  We show decidability of this problem under the assumption that in the defining recurrence $p(n)u_{n}=q(n)u_{n-1}$, the roots of the polynomials $p(x)$ and $q(x)$ are all rational numbers. Our proof relies on bounds on the density of primes in arithmetic progressions. 
We also observe a relationship between the decidability of the Membership problem (and variants) and the Rohrlich-Lang conjecture in transcendence theory. 
\end{abstract}


\maketitle

\DeclareRobustCommand{\gobblefive}[5]{}
\DeclareRobustCommand{\gobblenine}[9]{}
\newcommand*{\SkipTocEntry}{\addtocontents{toc}{\gobblenine}}

\section{Introduction}

Recursively defined sequences are natural objects of study in computation,
arising in the analysis of algorithms, weighted automata, loop termination, and probabilistic models, among many other areas.
In this context, the following decision problem frequently arises: \emph{does a given value appear in a given recurrence sequence?} Perhaps, the most famous example of this type of membership (or reachability) problem is the Skolem Problem, which asks to decide the existence of a zero term in a given linear recurrence sequence (with constant coefficients).
For the majority of problems of this kind, their decidability status is wide open and apparently very challenging.  For example, the Skolem Problem has been regarded as open since at least the 1970s, with decidability only known 
for linear recurrences of order at most four~\cite{MST84,Ver85}.

In this paper we are concerned with the Membership Problem for hypergeometric sequences, which are those satisfying an order-1 recurrence 
$p(n)u_{n}=q(n)u_{n-1}$
with polynomial coefficients $p(x)$ and $q(x)$.  In other words, we focus on order-1 polynomially recursive sequences: arguably the simplest class of recurrence sequences which are not linearly recursive.
The Membership Problem here asks, given a hypergeometric sequence $\langle u_n \rangle_{n=0}^\infty$ and a target $t$, whether there exists $n$ with $u_n=t$.  

Decidability of this problem
may appear trivial at first glance. Indeed, 
the sequence $\langle u_n \rangle_{n=0}^\infty$ either diverges in absolute value or converges to a 
value in $\mathbb{R}$.  If the sequences does not converge to~$t$
then it is not difficult to compute a bound $N$ such that $u_n \neq t$ for all $n>N$.  But such a bound can also be computed in case one knows that $\langle u_n \rangle_{n=0}^\infty$ converges to $t$ (by straightforward arguments about the monotonicity of the convergence).  The difficulty with this analysis is that it depends on being able to distinguish the two cases above, i.e., that it be
semi-decidable 
whether $\langle u_n \rangle_{n=0}^\infty$ converges to~$t$.  We explore this question in Section~\ref{sec:rohrlich}, observing its connection to the Rohrlich-Lang conjecture in number theory (following ideas similar to~\cite[Section 4]{kenison2020positivity}).

In the rest of the paper
we approach the decidability of the membership problem from a different angle---specifically, by considering the prime divisors of $u_n$.  Our strategy is to show that (except in some degenerate cases) for all sufficiently large $n$, $u_n$ has a prime divisor $p$ that is not also a prime divisor of the target $t$.  This allows us to compute a bound $N$ such that $u_n\neq t$ for all $n>N$.  We obtain this bound using classical results on the distribution of primes in arithmetic progressions.  However our proof requires that the 
polynomial-coefficients $p(x)$ and $q(x)$ of the order-1 recurrence that specifies the sequence 
$\langle u_n \rangle_{n=0}^\infty$ have rational roots.  Of course, in general, these roots will be algebraic numbers. We remark that such rationality assumptions are a feature of work on divisibility properties of hypergeometric sequences (and their associated generating functions, which are hypergeometric series);
see, for example,~\cite{DRR13}.  

\subsection{Related work}
The paper~\cite{AMM07} studies the asymptotic behaviour of  
    sequences of the form $v_p(u_n)$, where $v_p$ is the $p$-adic valuation and 
    $\langle u_n\rangle_{n=0}^\infty$ is a sequence of integers satisfying a 
    recurrence $u_{n} = q(n)u_{n-1}$, i.e.,
    an order-1 polynomial recurrence whose leading coefficient is constant.
By contrast, we consider general order-1 recurrences, but consider only divisibility or non-divisibility by a well-chosen set of primes.

The problem of deciding positivity of order-2 polynomially recursive 
    sequences and of deciding the existence of zeros in such sequences is considered in~\cite{KauersP10,KenisonKLLMOW021,NeumannO021,PillweinS15}.  These works all place syntactic restrictions on the 
    degrees of the polynomial-coefficients involved in the recurrences, and all four give algorithms that are not guaranteed to terminate for all initial values of a given recurrence (essentially due to phenomena similar to that explored in Section~\ref{sec:rohrlich}).
 %
   
    In~\cite{KauersP10} two algorithms are described   which can be used for proving positivity of P-finite recurrences. The termination of these algorithms is proved for restricted classes of balanced recurrences of order up to three;
a recurrence is balanced if the leading and trailing coefficient have the same degree and  all other coefficients are bounded by this degree. 
The termination guarantee for order-2 balanced recurrences requires that the characteristic roots be distinct and that one is working with a
generic solution of the recurrence (in which rate of growth corresponds
to the dominant characteristic roots).        
    Given a converging hypergeometric sequence $\langle u_n\rangle_{n=0}^\infty$ and a target value $t$, the sequence $u_n-t$ satisfies a balanced second order recurrence, but  the sequence $u_n -t$  will
have a double characteristic $1$ in the "hard cases" -- namely when
$u_{n+1}/u_n$ converges to~$1$.

    A polynomially recursive sequence
    is said to be a \emph{closed form} 
    if it is the sum 
    of hypergeometric sequences~\cite[Definition 8.1.1]{PetkovsekWilfZeilberger1996}.
        Using the fact that
    the quotient of two hypergeometric sequences is again hypergeometric, one can easily deduce that the problem of deciding the existence of a zero in a closed-form order-2 polynomially recursive sequence reduces to the membership problem for hypergeometric sequences.
    
    The paper~\cite{BBY11} proves a version of the Skolem-Mahler-Lech paper for a subclass of polynomially recursive sequences.
        Specifically for a sequence $\langle u_n\rangle_{n=0}^\infty$
    satisfying a polynomial recurrence $u_n = \sum_{k=1}^d p_k(n) u_{n-k}$, under the assumption that $p_d$ is a non-zero constant polynomial, it is shown that $\{ n \in \mathbb{N} : u_n =0\}$ is the union of a finite set and finitely many arithmetic progressions.  It remains open whether this conclusion extends to the class of all polynomially recursive sequences.  The proof of this result in~\cite{BBY11} 
uses Strassman's Theorem in $p$-adic analysis and appears not to give information of how to decide the existence of a zero in a given sequence.

\section{The Membership Problem} \label{sec:mp}
We denote by $\QQ[x]$ the ring of univariate polynomials with rational coefficients, and by $\QQ(x)$ the field of univariate rational functions with rational coefficients.

An infinite sequence $\langle u_n \rangle_{n=0}^\infty$ of rational numbers is called a univariate \emph{hypergeometric sequence} if it satisfies a recurrence of the form 
 \begin{equation}\label{eq:rel}
 p(n)u_{n} - q(n)u_{n-1} = 0 \, ,
\end{equation} 
 where $p(x),$ $q(x) \in \QQ[x]$ are polynomials, and  $p(x)$
 has no non-negative integer zeros. By the latter assumption on~$p(x)$, the  recurrence relation~\eqref{eq:rel} 
 uniquely defines an infinite sequence of rational numbers once the initial value $u_0\in\QQ$ is specified.  We say that such an induced sequence is a \emph{hypergeometric sequence with rational parameters} if both $p(x)$ and $q(x)$ split completely over~$\QQ$.

Recurrence~\eqref{eq:rel} can be reformulated as follows:
$$u_{n} = r(n)u_{n-1} \, ,$$
where $r(x)\in \QQ(x)$ is  a rational function that, 
by the assumption above, has no nonnegative integer pole.  The rational function~$r(x)$ is called the \emph{shift quotient} of~$\langle u_n \rangle_{n=0}^\infty$.

 Let us introduce the decision problem we seek to investigate.  We say that $t\in \mathbb{Q}$ is a \emph{member} of 
 a sequence $\langle u_n \rangle_{n=0}^\infty$ if there exists $n\in \NN$ such that $u_n=t$; we further refer to $n$ as an index of~$t$ in the sequence.  The \emph{Membership Problem (MP)} for hypergeometric sequences is the problem of deciding, given a hypergeometric sequence $\langle u_n \rangle_{n=0}^\infty$ (specified by a a recurrence of the form~\eqref{eq:rel} with a given initial value $u_0$) and a target $t\in \mathbb{Q}$, whether $t$ is a member of~$\langle u_n \rangle_{n=0}^\infty$.  Our main contribution in this paper is to show that MP is decidable for hypergeometric sequences with rational parameters. Our solution in fact allows to compute the set of all indices of~$t$ in the  sequence~$\langle u_n \rangle_{n=0}^\infty$.

We will assume that in instances of MP, the
numerator~$q(x)$ of the shift quotient
has no non-negative integer zeros and that the target $t$ is non-zero.  This assumption is without loss of generality for the decidability results as discussed in Appendix~\ref{appendix:zero}.

\subsection{Limit of Shift Quotients}\label{subsec:shiftquo}

Given an instance of MP, consisting of a rational value~$t$ and a hypergeometric sequence $\langle u_n \rangle_{n=0}^\infty$ with rational parameters and initial value~$u_0$, our general approach to solving MP is to show that there is an effectively computable upper bound~$N$, depending only on the sequence $\langle u_n \rangle_{n=0}^\infty$, such that $u_n=t$ only if $n \leq N$.  This reduces  the membership test for $t$ to that of searching within the finite set $\{u_0,\ldots, u_N\}$ and hence entails the decidability of MP and computability of the set of indices of $t$ in $\langle u_n \rangle_{n=0}^\infty$.

To obtain the bound $N$, we first note that as $x\rightarrow \infty$ the shift quotient $r(x)\in \QQ(x)$ converges to some limit in $\mathbb{Q}$ or diverges to $\pm \infty$.  In case $r(x)$ diverges to $\pm \infty$ or converges to some $\ell \in \mathbb{Q}$ with $|\ell|>1$, then there exists $N\in\NN$ such that $\left|u_n\right|=\left|u_0\prod_{k=1}^{n}r(k)\right|>|t|$ for all $n\geqslant N$.  An analogous argument applies when $r(x)$ converges to some~$\ell$ with $|\ell|<1$.

The challenging case is when the shift quotient $r(x)\in \QQ(x)$ of the sequence converges to $\pm 1$ as $x\to\infty$. In this case, we establish the existence of the bound~$N$ by providing an infinite sequence $\langle p_i \rangle_{i=0}^{\infty}$ of integer  primes such that for all sufficiently large~$n$ one  prime in the sequence appears in the prime decomposition of~$u_n$ but not~$t$. We rely on results from analytic number theory on density of primes to construct the sequence $\langle p_i \rangle_{i=0}^{\infty}$ of primes.   The proof, presented in Section~\ref{sec:main-result}, is  constructive and allows to compute the bound~$N$.

\section{Connection to the Rohrlich-Lang conjecture}\label{sec:rohrlich}
In this section, we highlight the link between the Membership Problem
for hypergeometric sequences and the Rohrlich(-Lang) conjecture, which
concerns algebraic relations among 
values of the Gamma function at
rational points.

Let $\Gamma$ denote the Gamma function~\cite{KauersP11}.  Using
Euler's infinite-product characterisation, it
can be shown that certain infinite products of rational functions
converge to quotients of values of the Gamma function. In particular,
the following is standard, see for example~\cite{ChamberlandStraub13}.

\begin{proposition}\label{prop:limit_product_rat}
    Let $d\geqslant 1$ and $\alpha_1,\ldots,\alpha_d$ and $\beta_1,\ldots,\beta_d$ be nonzero complex numbers,
    none of which are negative integers. If $\alpha_1 + \ldots + \alpha_d = \beta_1 + \ldots + \beta_d$
    then
    \begin{equation}\label{eq:limit_product_rat}
        \prod_{k=1}^\infty \frac{(k+\alpha_1)\cdots(k+\alpha_d)}{(k+\beta_1)\cdots(k+\beta_d)}
            =\frac{\beta_1\cdots\beta_d}{\alpha_1\cdots\alpha_d} \frac{\Gamma(\beta_1)\cdots\Gamma(\beta_d)}{\Gamma(\alpha_1)\cdots\Gamma(\alpha_d)},
    \end{equation}
    otherwise the infinite product diverges.
\end{proposition}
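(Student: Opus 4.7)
The plan is to reduce the infinite product to a ratio of Gamma functions by means of the telescoping identity
$$\prod_{k=1}^N (k+\alpha) \;=\; \frac{\Gamma(N+1+\alpha)}{\Gamma(1+\alpha)},$$
which follows directly from the functional equation $\Gamma(z+1)=z\,\Gamma(z)$. The hypothesis that no $\alpha_i$ or $\beta_i$ is a negative integer (and that none is zero) guarantees that all Gamma values appearing in what follows are finite and nonzero.

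Applying this identity to each of the $d$ factors in the numerator and the denominator, the $N$-th partial product can be rewritten as
\begin{equation*}
    \prod_{k=1}^{N}\frac{(k+\alpha_1)\cdots(k+\alpha_d)}{(k+\beta_1)\cdots(k+\beta_d)}
      \;=\; \prod_{i=1}^{d}\frac{\Gamma(1+\beta_i)}{\Gamma(1+\alpha_i)}\;\cdot\;\prod_{i=1}^{d}\frac{\Gamma(N+1+\alpha_i)}{\Gamma(N+1+\beta_i)}.
\end{equation*}
The first factor is independent of $N$, while the second is controlled by the classical asymptotic $\Gamma(N+a)/\Gamma(N+b) = N^{a-b}\bigl(1+O(1/N)\bigr)$ that follows from Stirling's formula (valid for complex $a,b$ since $N$ is real and tends to $+\infty$). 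Multiplying the $d$ asymptotics together gives
\begin{equation*}
    \prod_{i=1}^{d}\frac{\Gamma(N+1+\alpha_i)}{\Gamma(N+1+\beta_i)} \;=\; N^{\,\sum_i(\alpha_i-\beta_i)}\bigl(1+O(1/N)\bigr).
\end{equation*}

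If $\sum_i\alpha_i=\sum_i\beta_i$, the exponent of $N$ vanishes and this factor tends to $1$, so the partial products converge to $\prod_i \Gamma(1+\beta_i)/\Gamma(1+\alpha_i)$. Using $\Gamma(1+x)=x\,\Gamma(x)$ to strip the $+1$ shift from each Gamma value then yields exactly the right-hand side of~\eqref{eq:limit_product_rat}. Conversely, if $\sum_i\alpha_i\neq\sum_i\beta_i$, then $N^{\sum_i(\alpha_i-\beta_i)}$ tends to $0$ or $\infty$ along the real axis according to the sign of the exponent, so the infinite product diverges (in the standard sense that its limit is not a nonzero finite number).

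The only genuinely delicate step is justifying the Stirling ratio uniformly for the complex shifts $\alpha_i,\beta_i$; this is a standard consequence of the full asymptotic expansion $\log\Gamma(z) = (z-\tfrac12)\log z - z + \tfrac12\log(2\pi) + O(1/|z|)$ in any sector bounded away from the negative real axis, applied to $z=N+1+\alpha_i$ and $z=N+1+\beta_i$ with $N$ large. Everything else is a routine combination of the functional equation for $\Gamma$ and the telescoping product.
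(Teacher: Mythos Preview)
The paper does not supply its own proof of this proposition; it simply records the result as standard and refers to \cite{ChamberlandStraub13}. Your argument is correct and is precisely the classical one: express each linear product $\prod_{k=1}^N(k+\alpha)$ as $\Gamma(N{+}1{+}\alpha)/\Gamma(1{+}\alpha)$, separate the $N$-independent factor, and control the remaining ratio with the Stirling estimate $\Gamma(N+a)/\Gamma(N+b)\sim N^{a-b}$.

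One cosmetic point on the divergence clause: because the parameters are allowed to be complex, the exponent $\sigma:=\sum_i(\alpha_i-\beta_i)$ need not be real, so speaking of its ``sign'' is imprecise. If $\mathrm{Re}\,\sigma\neq 0$ then $|N^{\sigma}|=N^{\mathrm{Re}\,\sigma}\to 0$ or $\infty$; if $\sigma$ is purely imaginary and nonzero then $N^{\sigma}=e^{i(\mathrm{Im}\,\sigma)\log N}$ fails to converge. In every case the infinite product diverges, so your conclusion is unaffected.
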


We have already observed that the difficult case of the Membership Problem
for hypergeometric sequences is when the shift quotient $r(x)$
converges to $\pm 1$ as $x$ tends to infinity.  By splitting the numerator
and denominator of $r(x)$ into linear factors, the Membership Problem in this case is equivalent to
deciding whether there exists $n\in \NN$ such that the finite product
\begin{equation}\label{eq:finite_prod_rat}
    \prod_{k=1}^n \frac{(k+\alpha_1)\cdots(k+\alpha_d)}{(k+\beta_1)\cdots(k+\beta_d)}
\end{equation}
is equal to a given value $T := \frac{t}{u_0} \in\mathbb{Q}$. The link between this last
problem and \Cref{prop:limit_product_rat} arises from the fact that 
one can compute a bound $n_0$ such that for all $n>n_0$ the expression~\eqref{eq:finite_prod_rat} is either strictly
increasing or strictly decreasing as a function of $n$.  As we observe below, this allows us to 
reduce 
the Membership Problem 
to a question about infinite products:

\begin{proposition}\label{prop:membership_reduce_equal_gamma_quotient}
The Membership Problem for hypergeometric sequences
 \emph{with real parameters} reduces to
    deciding, given $d\geq 1$ and $\alpha_1,\ldots,\alpha_d, \beta_1,\ldots,\beta_d \in \RR \setminus \ZZ_{<0}$,
    whether
    \begin{equation}\label{eq:monomial_eq_gamma}
        \Gamma(\beta_1)\cdots\Gamma(\beta_d)=\Gamma(\alpha_1)\cdots\Gamma(\alpha_d).
    \end{equation}
\end{proposition}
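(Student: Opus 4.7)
The plan is to leverage the monotonicity of the partial products $P_n := \prod_{k=1}^n r(k)$ in the hard case $r(x)\to\pm 1$, which (as noted in the text) yields a computable $n_0$ beyond which $P_n$ is strictly monotonic. By \Cref{prop:limit_product_rat}, either $\sum_i \alpha_i \neq \sum_i \beta_i$, in which case $|P_n|\to\infty$ and MP reduces to a finite search, or else the product converges to the explicit limit $L$ given in~\eqref{eq:limit_product_rat}.

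For $T\neq L$, the question ``$\exists n > n_0: P_n = T$'' can be decided by a finite interleaving: I would compute successive $P_n$ exactly while approximating $L$ to progressively finer precision; one of two events must happen first, namely $P_n$ equals or crosses $T$ (which can be checked exactly), or the approximation certifies $|P_n-L| < |T-L|$ so that by monotonicity $P_n$ will never reach $T$. The only scenario in which this interleaving fails to halt is $T = L$, but then strict monotonicity already guarantees $P_n \neq L$ for all $n > n_0$, so MP is NO provided we can certify the equality $T = L$. Thus MP reduces to deciding whether $T = L$.

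The equation $T = L$ reads
\[
\frac{\Gamma(\beta_1)\cdots\Gamma(\beta_d)}{\Gamma(\alpha_1)\cdots\Gamma(\alpha_d)} \;=\; \frac{T\,\alpha_1\cdots\alpha_d}{\beta_1\cdots\beta_d}.
\]
Using the functional equation $x\,\Gamma(x) = \Gamma(x+1)$ to absorb each $\alpha_i$ and $\beta_i$, and writing $T = \varepsilon\, a/b$ with $\varepsilon\in\{\pm 1\}$ and positive integers $a,b$, the same identity applied to $a$ and $b$ lets us cross-multiply into a product-equality between two Gamma products with equal numbers of real-argument factors on each side, up to an explicit overall sign. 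Since the sign of $\Gamma(\gamma)$ for $\gamma\in\RR\setminus\ZZ_{<0}$ is determined by $\lfloor\gamma\rfloor$ and hence computable, the sign parity can be checked first: if it disagrees, MP is NO, and if it agrees we are left with precisely an instance of~\eqref{eq:monomial_eq_gamma}. This sign-handling is the main technical point of the reduction, but it is effective and does not constitute a substantive obstacle.
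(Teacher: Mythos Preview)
Your argument is correct and follows essentially the same route as the paper: reduce to the case $r(x)\to\pm 1$, use eventual strict monotonicity of the partial products together with the semi-decidability of $T<L$ and $T>L$ to reduce MP to deciding $T=L$, and finally massage $T=L$ into an instance of~\eqref{eq:monomial_eq_gamma} via the functional equation of~$\Gamma$.

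The only notable difference is in this last massaging step. You absorb each $\alpha_i,\beta_i$ into its Gamma factor, then write the rational $T=\varepsilon\,a/b$ and encode the positive integers $a,b$ as further Gamma ratios, leaving a residual sign to be checked separately. The paper instead applies the functional equation once to the single real quantity $T/C$ (where $C=\tfrac{\beta_1\cdots\beta_d}{\alpha_1\cdots\alpha_d}$), writing $T/C=\Gamma(T/C+1)/\Gamma(T/C)$ and thereby adding exactly one new parameter to each side with no sign case-analysis. Your approach is a bit more elaborate but equally valid; the paper's is slicker and yields fewer extra parameters.
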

\begin{proof}[Proof sketch]
As discussed in Section~\ref{subsec:shiftquo}, the only case of the Membership Problem that is not trivially decidable is when the shift quotient $r(x)\in \QQ(x)$ of the sequence $\langle u_n \rangle_{n=0}^\infty$ converges to $\pm 1$ as $x\to\infty$. We assume without loss of generality that for the instances of the problem we reason about below, the shift quotient is converging.
    We treat the case that the 
     product \eqref{eq:finite_prod_rat}
    is eventually strictly increasing.  The case
    that it is eventually strictly decreasing follows \emph{mutatis mutandis}.

Write $\omega := C \cdot \frac{\Gamma(\beta_1)\cdots\Gamma(\beta_d)}{\Gamma(\alpha_1)\cdots\Gamma(\alpha_d)}$ for the limit \eqref{eq:limit_product_rat}
of the finite product~\eqref{eq:finite_prod_rat}
as $n$ tends to~$\infty$, where $C = \frac{\beta_1\cdots\beta_d}{\alpha_1\cdots\alpha_d}$.
By the assumption that 
\eqref{eq:finite_prod_rat}
    is eventually strictly increasing, 
    we can 
compute $n_0$ such that 
$\frac{u_n}{u_0} < \omega$ for all $n > n_0$.
Let $T := \frac{t}{u_0}$.
If $\omega \leq T$ then 
we have $u_n\neq t$ for all $n>n_0$, 
and it remains to check by exhaustive search whether
$t\in \{u_0,\ldots,u_{n_0}\}$.
On the other hand, if $\omega > T$, then we can find $n_1 \geq n_0$ such that $u_n>t$ for all $n>n_1$.  Again, this leaves only a finite number of cases to check.

We thus need only decide whether or not $\omega \leq T$.  But
it is recursively enumerable whether $\omega < T$ and whether $\omega > T$, simply by computing 
$\omega$ to sufficient precision.  Thus
the Membership Problem for real parameters reduces to deciding whether $\omega =T$.
    Now note that $\Gamma\big(\frac{T}{C}+1\big)=\frac{T}{C}\Gamma\big(\frac{T}{C}\big)$, hence
    \[
        \omega=T
            \;\Leftrightarrow\; \frac{\Gamma(\beta_1)\cdots\Gamma(\beta_d)\Gamma\big(\frac{T}{C}\big)}{\Gamma(\alpha_1)\cdots\Gamma(\alpha_d)\Gamma\big(\frac{T}{C}+1\big)}=1.
    \]
    But the equation above is an instance of 
     \eqref{eq:monomial_eq_gamma} with two extra parameters, $\alpha_{d+1}:=\frac{T}{C}+1$ and 
     $\beta_{d+1}:=\frac{T}{C}$.  This completes the reduction.
\end{proof}

Unfortunately, deciding whether \eqref{eq:monomial_eq_gamma} holds 
appears to be a difficult problem, and we take a different approach to
solving the Membership Problem in the rest of this paper.
Nevertheless~\eqref{eq:monomial_eq_gamma} is still of interest for closely
related problems, such as the \emph{Threshold Problem} which asks, given
a hypergeometric sequence $\langle u_n \rangle_{n=0}^\infty$ and a
target $t$, whether $u_n\geqslant t$ holds for all $n\in\NN$.  In fact
we have:

\begin{proposition}\label{prop:positivity_intereduce_equal_gamma_quotient}
    The Threshold Problem for hypergeometric sequences with \emph{real parameters}
    is interreducible with the problem of
    deciding, given $d\geq 1$ and $\alpha_1,\ldots,\alpha_d,$  $\beta_1,\ldots,\beta_d \in \RR \setminus \ZZ_{<0}$,
    whether \eqref{eq:monomial_eq_gamma} holds.
\end{proposition}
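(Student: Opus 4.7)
The plan is to establish both directions of the interreducibility separately. For the reduction of the Threshold Problem to~\eqref{eq:monomial_eq_gamma}, I would follow the template of the proof of \Cref{prop:membership_reduce_equal_gamma_quotient}: all cases other than that of an eventually strictly monotonic convergent sequence are dispatched by the observations of \Cref{subsec:shiftquo}. In the remaining case, assume $u_n$ is eventually strictly decreasing to a limit $\omega$ (the increasing case is symmetric); then $u_n>\omega$ for all $n$ past some computable $n_0$, so ``$u_n\geq t$ for all $n$'' is equivalent to a finite prefix check together with $\omega\geq t$. Because $\omega$ is an explicit Gamma quotient, both $\omega<t$ and $\omega>t$ are semi-decidable by computing $\omega$ to sufficient precision, leaving only the boundary $\omega=t$; this case is handled exactly as in the proof of \Cref{prop:membership_reduce_equal_gamma_quotient}, by absorbing $t/C$ into an extra pair of parameters $(\alpha_{d+1},\beta_{d+1})=(t/C+1,\,t/C)$ to obtain an equivalent instance of~\eqref{eq:monomial_eq_gamma}.

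For the converse direction, given $\alpha_1,\ldots,\alpha_d,\beta_1,\ldots,\beta_d$, I would first reduce to the balanced-sum case $\sum\alpha_i=\sum\beta_i$ by appending auxiliary parameters whose Gamma factors cancel, using $\Gamma(x+1)=x\Gamma(x)$ together with Gauss's multiplication formula to absorb the sum difference. Once balanced, \Cref{prop:limit_product_rat} provides a hypergeometric sequence $u_n$ whose limit $\omega$ equals an explicit rational $C$ times $\prod\Gamma(\beta_i)/\prod\Gamma(\alpha_i)$, so~\eqref{eq:monomial_eq_gamma} becomes the rational equality $\omega=C$. To decide $\omega=C$ using a Threshold oracle, I would issue two queries: one on $u_n$ (or on $-u_n$) decides whichever of $\omega\geq C$ and $\omega\leq C$ matches the direction of monotone convergence of $u_n$, and one on a companion hypergeometric sequence $v_n$ that converges monotonically to $\omega$ from the opposite side yields the complementary inequality.

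The principal obstacle is the construction of the companion sequence $v_n$. My plan is to set $v_n:=u_n\cdot w_n$ for a real-parameter hypergeometric corrector $w_n$ with $\lim w_n=1$, whose shift quotient contributes a second-order Laurent term of the sign opposite to that of the shift quotient of $u_n$. Multiplying by $w_n$ then flips the side from which the product approaches $\omega$ while leaving the limit unchanged. The existence of such a corrector with parameters in $\mathbb{R}\setminus\mathbb{Z}_{<0}$ can be established via the strict log-convexity of $\Gamma$ on $(0,\infty)$: on the one-parameter families of Gamma-product-preserving balanced deformations, the second-moment $\sum b_i^2-\sum a_i^2$ varies continuously and attains any prescribed sign. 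Gauss's multiplication formula $\prod_{k=0}^{m-1}\Gamma(z+k/m)=(2\pi)^{(m-1)/2}m^{1/2-mz}\Gamma(mz)$ supplies concrete families of such deformations, with the required parameters computable to arbitrary precision.
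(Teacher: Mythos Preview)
The paper states \Cref{prop:positivity_intereduce_equal_gamma_quotient} without proof, so there is no argument in the paper to compare your proposal against. Your forward reduction (Threshold to Gamma-equality) correctly follows the template of the proof sketch of \Cref{prop:membership_reduce_equal_gamma_quotient} and goes through as written.

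For the converse, your two-query strategy is the right idea, and a companion sequence converging to a known multiple of $\omega$ from the \emph{opposite} side is genuinely required: applying sign changes or reciprocals to $u_n$ always returns the same inequality $\omega\geq C$ (or an equivalent reformulation) from the Threshold oracle, never $\omega\leq C$. Two steps in your sketch are loose, however. First, translation and Gauss multiplication shift the defect $\sum\alpha_i-\sum\beta_i$ only by half-integers, so they cannot ``absorb'' an arbitrary real $\delta$; since real parameters are permitted, the clean fix is to append a single pair such as $(\alpha_{d+1},\beta_{d+1})=(x,x+\delta)$ and fold the computable constant $\Gamma(x+\delta)/\Gamma(x)$ into the Threshold target. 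Second, insisting on a corrector with $\lim w_n=1$ exactly forces you to solve a transcendental constraint on its parameters, and ``computable to arbitrary precision'' does not obviously furnish an exact oracle instance. It is simpler to drop that constraint: taking $s(x)=\big((x+1)(x+3)/(x+2)^2\big)^{m}$ or its reciprocal gives a balanced rational-parameter corrector with known rational limit $(2/3)^{\pm m}$ and second-order coefficient $\mp m$; choosing the sign and taking $m$ large enough flips the eventual monotonicity of $v_n=u_nw_n$, and the second Threshold query can then be posed with the explicit target $C\cdot(2/3)^{\pm m}$.
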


Examining~\eqref{eq:monomial_eq_gamma} in more detail, we note that
the values of the Gamma function at rational points may be
transcendental.  Moreover, we are not aware of any lower bound on the
difference between the two terms of \eqref{eq:monomial_eq_gamma} in
case they are different, thus ruling out a numerical algorithm to
decide equality.  The following example from
\cite{ChamberlandStraub13} illustrates a case where a quotient is an
integer but in which none of the values of $\Gamma$ involved are known
to be algebraic:
\[
    \frac{\Gamma(\tfrac{1}{14})\Gamma(\tfrac{9}{14})\Gamma(\tfrac{11}{14})}
        {\Gamma(\tfrac{3}{14})\Gamma(\tfrac{5}{14})\Gamma(\tfrac{13}{14})}=2.
      \]
      
A different approach is to study the algebraic relations among
the values of the Gamma function.
D. Rohrlich considered the question of
giving a complete list of all \emph{multiplicative relations} relations among
the values of the Gamma function \emph{at rational points}. Recall that $\Gamma$ satisfies
the following three standard relations:
\begin{equation}\label{eq:gamma_std_rel}
    \begin{array}{@{}r@{\,}ll@{}}
    \Gamma(x+1)
        &=x\Gamma(x)
        &\text{(Translation)},\\
    \Gamma(x)\Gamma(1-x)
        &=\frac{\pi}{\sin(\pi x)}
        &\text{(Reflection)},\\
    \prod_{k=0}^{n-1}\Gamma\left(x+\frac{k}{n}\right)
        &=(2\pi)^{\tfrac{n-1}{2}}n^{\tfrac{1}{2}-nx}\Gamma(nx)
        &\text{(Multiplication)}
    \end{array}
\end{equation}
for all $x\in\CC$ except at poles. 

\begin{conjecture}[Rohrlich]
    Any multiplicative relation of the form
    \[
        \pi^{b/2}\prod_{a\in\QQ}\Gamma(a)^{m_a}\in\QQbar
    \]
    with $b$ and $m_a$ in $\ZZ$ is a consequence of the standard relations \eqref{eq:gamma_std_rel}.
\end{conjecture}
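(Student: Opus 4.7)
The stated ``conjecture'' is a famous open problem in transcendence theory, so my proposal cannot be a genuine proof but only a description of the framework in which the conjecture is approached and the principal barrier to completing it. The plan is to translate the multiplicative relations among Gamma values into algebraic relations among periods of CM motives, apply Deligne's theorem to handle the motivic part, and then observe that the remainder is precisely the Grothendieck period conjecture.

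First, I would make the period interpretation precise. By the Chowla--Selberg formula, each value $\Gamma(a/N)$ with $\gcd(a,N)=1$ can be written, up to a power of $\pi$ and an algebraic factor, as a period of a Fermat curve $x^N+y^N=1$, and more generally of a CM abelian variety with complex multiplication by the $N$-th cyclotomic field. Writing the relation $\pi^{b/2}\prod_{a\in\QQ}\Gamma(a)^{m_a}\in\QQbar$ in terms of these periods, the three standard identities \eqref{eq:gamma_std_rel} translate respectively into the functional equation of the Gauss sum, the Poincar\'e duality pairing between the Fermat motive and its complex conjugate, and the norm map between Fermat motives of different levels. Thus the first step is bookkeeping: attach to each side of the conjectural relation an object in the category of CM motives so that the standard identities become motivic morphisms.

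Next I would invoke Deligne's theorem that absolute Hodge cycles on abelian varieties are motivic, together with the Tannakian description of the category of CM motives via the Taniyama group. Anderson has shown that the motivic relations among periods of Fermat motives are generated exactly by the three relations of \eqref{eq:gamma_std_rel}, so combining these results yields that every \emph{motivic} multiplicative relation among Gamma values at rational arguments is a consequence of \eqref{eq:gamma_std_rel}.

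The main obstacle, and the reason the conjecture is open, lies in the final step: ruling out algebraic relations that are not motivic in origin. This is precisely the Grothendieck period conjecture for one-motives of CM type, which is itself unresolved. Without it, there is no known technique for excluding an accidental algebraicity of a product of Gamma values and powers of $\pi$. Unconditionally one can therefore hope only for partial results, such as the algebraic independence of $\Gamma(1/3)$ and $\pi$ (Chudnovsky) or linear independence results obtained by $G$-function methods. My plan would thus, at best, yield a proof of Rohrlich's conjecture conditional on Grothendieck's period conjecture.
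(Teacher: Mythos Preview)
Your recognition that this is an open conjecture is correct, and indeed the paper does not prove it either: the statement is explicitly labelled \emph{Conjecture (Rohrlich)} and the surrounding text says that it ``remains wide open.'' There is therefore no proof in the paper to compare your proposal against. The paper uses Rohrlich's conjecture only as a hypothesis (to deduce conditional decidability of the Membership and Threshold Problems) and makes no attempt to establish it.

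Your outline of the motivic/period framework (Chowla--Selberg, Fermat motives, Deligne's absolute Hodge theorem, Anderson's description of relations, and the reduction to the Grothendieck period conjecture) is a reasonable high-level summary of how the problem is situated in the literature, and your identification of the Grothendieck period conjecture as the essential obstruction is accurate. This goes well beyond anything the paper attempts; the paper simply cites Lang, Koblitz--Ogus, Waldschmidt, and Andr\'e for context and moves on. So there is no discrepancy to flag: both you and the paper agree that no proof exists.
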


This conjecture was formalised by Lang in terms of ``universal
distribution'' \cite{Lang1977}.  This conjecture remains wide open and
is part of the larger work on the transcendence of periods
\cite{Waldschmidt06}.  The closest result to our problem is a theorem
by Koblitz and Ogus \cite{KoblitzOgus1979} giving a sufficient
condition under which a quotient of values of Gamma is algebraic.  A
stronger conjecture, known as the Rohrlich-Lang conjecture deals more
generally with polynomial relations. See
\cite[Section~24.6]{Andre2004} for more details on these conjectures.

Note that Rohrlich's conjecture is only concerned with rational
parameters and, as far as we are aware, there is no analog of this
conjecture for algebraic parameters.  We now observe that if
Rohrlich's conjecture is true, then the Membership and Threshold
Problems become decidable \emph{for rational parameters}. The decidability of the Threshold Problem for hypergeometric sequences subject to the Rohlich-Lang conjecture was first observed in the manuscript~\cite[Section 4]{kenison2020positivity}.

\begin{theorem}
  The Membership and Threshold Problems for hypergeometric sequences
  with rational parameters are decidable if Rohrlich's conjecture is
  true.
\end{theorem}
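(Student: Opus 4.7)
The plan is to reduce both problems to Gamma-function identities and then apply Rohrlich's conjecture to turn these identities into a finite linear-algebra test. By \Cref{prop:membership_reduce_equal_gamma_quotient} and \Cref{prop:positivity_intereduce_equal_gamma_quotient}, it suffices to decide, given rationals $\alpha_1,\ldots,\alpha_d,\beta_1,\ldots,\beta_d\in\QQ\setminus\ZZ_{\leq 0}$, whether $\prod_i\Gamma(\alpha_i)=\prod_j\Gamma(\beta_j)$. The rationality of the instance is preserved by the two reductions; for example, the extra parameters $\tfrac{T}{C}$ and $\tfrac{T}{C}+1$ built in the proof of \Cref{prop:membership_reduce_equal_gamma_quotient} are rational whenever $T$ and $C$ are. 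Rewriting the identity as $\prod_{a\in\QQ}\Gamma(a)^{m_a}=1$ for integer exponents $m_a$ of finite support, the equation expresses that a monomial in values of $\Gamma$ at rational points equals the algebraic number $1$, so Rohrlich's conjecture asserts that it holds if and only if it is a consequence of the three relations \eqref{eq:gamma_std_rel}.

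To make this effective, first use translation to shift every $a$ in the support of $m$ into $(0,1]$, replacing the product by $R\cdot\prod_{a\in(0,1]\cap\QQ}\Gamma(a)^{m'_a}$ with an explicit rational $R\in\QQ^\times$. Let $N$ be the common denominator of the remaining parameters, so each is of the form $k/N$ with $1\leq k\leq N$. Every instance of the reflection identity and every instance of the multiplication identity that stays within denominator-$N$ parameters yields a relation of the form $\prod_k\Gamma(k/N)^{v_k}=A_v\cdot\pi^{b_v/2}$ with exponent vector $v\in\ZZ^N$, explicit $A_v\in\QQbar$, and $b_v\in\ZZ$. There are only finitely many such instances, so the sublattice $L\subseteq\ZZ^N$ they generate, together with the homomorphism $L\to\QQbar^\times\times\ZZ$ sending $v$ to $(A_v,b_v)$, are effectively computable.

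Under Rohrlich's conjecture, $L$ is exactly the lattice of exponent vectors for which $\prod_k\Gamma(k/N)^{v_k}$ lies in $\pi^{\ZZ/2}\cdot\QQbar$, and the associated homomorphism records its value. The decision procedure is then: via Smith normal form, test whether $(m'_{k/N})_k$ lies in $L$; if not, by Rohrlich's conjecture the product is not in $\pi^{\ZZ/2}\cdot\QQbar$, hence distinct from the rational number $R^{-1}$, so the original equation fails; if it does lie in $L$, recover the corresponding $(A,b)$ and check whether $R\cdot A\cdot\pi^{b/2}=1$, which, since $\pi$ is transcendental, reduces to $b=0$ and $R\cdot A=1$, a decidable equality in $\QQbar$.

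The main obstacle is confirming that the finitely many instances of the three standard relations really generate the lattice $L$ predicted by Rohrlich's conjecture; this amounts to a finite-dimensional version of the Kubert--Lang universal-distribution formalism, in which one must exhibit a \emph{complete} finite family of translation, reflection, and multiplication relations for each fixed denominator $N$. Once that completeness is in place, every remaining step of the algorithm is elementary linear algebra over $\ZZ$ and comparison of explicit algebraic numbers.
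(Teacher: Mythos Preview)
Your approach is genuinely different from the paper's and considerably more ambitious.  The paper gives a two-line dovetailing argument: under Rohrlich's conjecture the set of valid identities~\eqref{eq:monomial_eq_gamma} is recursively enumerable (enumerate all finite derivations from the standard relations~\eqref{eq:gamma_std_rel} and check whether any yields the target identity), while the complement is recursively enumerable by numerical approximation (if the identity fails, computing both sides to enough precision eventually witnesses this).  No finite generating set of relations is ever identified, and no lattice computation is performed.

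Your route instead tries to extract an explicit algorithm: shift into $(0,1]$, pass to a common denominator~$N$, list the finitely many denominator-$N$ instances of the three relations, form the lattice~$L$ they span, and test membership via Smith normal form.  This would give a far more informative procedure if it worked, but the step you yourself flag as ``the main obstacle'' is a real gap, not a routine verification.  Rohrlich's conjecture says only that every multiplicative Gamma relation is a \emph{consequence} of~\eqref{eq:gamma_std_rel}; it does not say that a relation among the $\Gamma(k/N)$ already follows from the denominator-$N$ instances alone.  A derivation might pass through auxiliary $\Gamma$-values with other denominators, and ruling that out is exactly the structural content of the Kubert--Lang universal-distribution theory you allude to --- it is a separate theorem, not a corollary of the conjecture.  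As written, then, you have reduced the theorem to an unproven (if plausible) completeness statement, so the proof is incomplete.

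The paper's argument sidesteps this entirely: by enumerating \emph{all} finite derivations rather than committing to a finite generating set, recursive enumerability follows directly from the statement of Rohrlich's conjecture, with no additional structural input.  If you want to keep your more explicit approach, you must either prove the finite-generation claim for each~$N$ or cite a precise result from the universal-distribution literature that delivers it.
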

\begin{proof}[Proof sketch]
By \Cref{prop:membership_reduce_equal_gamma_quotient,prop:positivity_intereduce_equal_gamma_quotient}, the Membership and Threshold Problems both
reduce to the question of deciding equations of the form~\eqref{eq:monomial_eq_gamma}, in which
the parameters $\alpha_i$ and $\beta_i$ are rational.  Assuming
Rohrlich's conjecture, the latter problem is recursively enumerable:
if equality holds, then the equation has a finite derivation using the
standard relations~\eqref{eq:gamma_std_rel}.  On the other hand, the problem is also
straightforwardly co-recursively enumerable: if Equation~\eqref{eq:monomial_eq_gamma} does not
hold, then by computing the left and right-hand sides to sufficient
precision we will eventually conclude that the two terms are not equal.
\end{proof}

\section{The Main Result} \label{sec:main-result}

\begin{figure*}
\scalebox{.75}{
\begin{tikzpicture}
\def \x {1/2}

\draw[gray, ultra thick] (0,4) -- (17*\x,4);  
\node[left] at (33*\x,4) {{\bf prime $17$}};
\foreach \y in {1,...,16}
\draw[draw=gray] (\y*\x,4.1)--(\y*\x,3.9);
 \draw[gray, line width=1mm, opacity=.5] (0,4.2) arc[start angle=150, end angle=210, radius=.4cm];
\draw[gray, line width=1mm, opacity=.5] (17*\x,4.2) arc[start angle=30, end angle=-30, radius=.4cm];
\node[left] at (.2,3.65) {{\large  $0$}};
\filldraw [violet!30!white, opacity=0.5] (3*\x,4.1) rectangle (4*\x,3.9);
\filldraw [violet!30!white, opacity=0.5] (5*\x,4.1) rectangle (16*\x,3.9);
\node at (9*\x,4.3) {\textcolor{violet}{expanding}};

\draw[gray, ultra thick] (0,2) -- (23*\x,2);
\node[left] at (33*\x,2) {{\bf prime $23$}};
\foreach \y in {1,...,22}
\draw[draw=gray] (\y*\x,2.1)--(\y*\x,2-.1);
\draw[gray, line width=1mm, opacity=.5] (0,2.2) arc[start angle=150, end angle=210, radius=.4cm];
\draw[gray, line width=1mm, opacity=.5] (23*\x,2.2) arc[start angle=30, end angle=-30, radius=.4cm];
\node[left] at (.2,2-.35) {{\large  $0$}};
\filldraw [violet!30!white, opacity=0.5] (6*\x,2.1) rectangle (7*\x,2-.1);
\filldraw [violet!30!white, opacity=0.5] (8*\x,2.1) rectangle (22*\x,2-.1);
\node at (13*\x,2.3) {\textcolor{violet}{expanding}};
\draw[thick, blue, draw=none, pattern=north east lines] (8*\x,3.5) -- (16*\x,3.5) -- (16*\x,2.5) -- (8*\x,2.5) -- cycle;
\node at (12*\x,3) {contiguous};

\draw[gray, ultra thick] (0,0) -- (29*\x,0);
\node[left] at (33*\x,0) {{\bf prime $29$}};
\foreach \y in {1,...,28}
\draw[draw=gray] (\y*\x,.1)--(\y*\x,-.1);
\draw[gray, line width=1mm, opacity=.5] (0,.2) arc[start angle=150, end angle=210, radius=.4cm];
\draw[gray, line width=1mm, opacity=.5] (29*\x,.2) arc[start angle=30, end angle=-30, radius=.4cm];
\node[left] at (.2,-.35) {{\large  $0$}};
\filldraw [violet!30!white, opacity=0.5] (9*\x,.1) rectangle (10*\x,-.1);
\filldraw [violet!30!white, opacity=0.5] (11*\x,.1) rectangle (28*\x,-.1);
\node at (18*\x,.3) {\textcolor{violet}{expanding}};
\draw[thick, blue, draw=none, pattern=north east lines] (11*\x,1.5) -- (22*\x,1.5) -- (22*\x,0.5) -- (11*\x,0.5) -- cycle;
\node at (16.5*\x,1) {contiguous};

\node[left] at (3*\x+.2,3.7) {{\large  $3$}};
\node[teal]  at (3*\x,4.3) {{\scriptsize $\beta_1$}};
\filldraw[teal] (3*\x,4) circle (3pt);

\node[left] at (4*\x+.2,3.7) {{\large  $4$}};
\node[green!70!black]  at (4*\x,4.3) {{\scriptsize $\alpha_1$}};
\filldraw[green!70!black] (4*\x,4) circle (3pt);

\node[left] at (5*\x+.2,3.7) {{\large  $5$}};
\node[green!70!black]  at (5*\x,4.3) {{\scriptsize $\alpha_2$}};
\filldraw[green!70!black] (5*\x,4) circle (3pt);

\node[left] at (6*\x+.2,3.7) {{\large  $6$}};
\node[green!70!black]  at (6*\x,4.3) {{\scriptsize $\alpha_3$}};
\filldraw[green!70!black] (6*\x,4) circle (3pt);

\node[left] at (13*\x+.2,3.7) {{\large  $13$}};
\node[teal]  at (13*\x,4.3) {{\scriptsize $\beta_2$}};
\filldraw[teal] (13*\x,4) circle (3pt);

\node[left] at (16*\x+.2,3.7) {{\large  $16$}};
\node[teal]  at (16*\x,4.3) {{\scriptsize $\beta_3$}};
\filldraw[teal] (16*\x,4) circle (3pt);

\node[left] at (6*\x+.2,1.7) {{\large  $6$}};
\node[teal]  at (6*\x,2.3) {{\scriptsize $\beta_1$}};
\filldraw[teal] (6*\x,2) circle (3pt);

\node[left] at (7*\x+.2,1.7) {{\large  $7$}};
\node[green!70!black]  at (7*\x,2.3) {{\scriptsize $\alpha_1$}};
\filldraw[green!70!black] (7*\x,2) circle (3pt);

\node[left] at (8*\x+.2,1.7) {{\large  $8$}};
\node[green!70!black]  at (8*\x,2.3) {{\scriptsize $\alpha_2$}};
\filldraw[green!70!black] (8*\x,2) circle (3pt);

\node[left] at (9*\x+.2,1.7) {{\large  $9$}};
\node[green!70!black]  at (9*\x,2.3) {{\scriptsize $\alpha_3$}};
\filldraw[green!70!black] (9*\x,2) circle (3pt);

\node[left] at (19*\x+.2,1.7) {{\large  $19$}};
\node[teal]  at (19*\x,2.3) {{\scriptsize $\beta_2$}};
\filldraw[teal] (19*\x,2) circle (3pt);

\node[left] at (22*\x+.2,1.7) {{\large  $22$}};
\node[teal]  at (22*\x,2.3) {{\scriptsize $\beta_3$}};
\filldraw[teal] (22*\x,2) circle (3pt);

\node[left] at (9*\x+.2,-.3) {{\large  $9$}};
\node[teal]  at (9*\x,.3) {{\scriptsize $\beta_1$}};
\filldraw[teal] (9*\x,0) circle (3pt);

\node[left] at (10*\x+.2,-.3) {{\large  $10$}};
\node[green!70!black]  at (10*\x,.3) {{\scriptsize $\alpha_1$}};
\filldraw[green!70!black] (10*\x,0) circle (3pt);

\node[left] at (11*\x+.2,-.3) {{\large  $11$}};
\node[green!70!black]  at (11*\x,.3) {{\scriptsize $\alpha_2$}};
\filldraw[green!70!black] (11*\x,0) circle (3pt);

\node[left] at (12*\x+.2,-.3) {{\large  $12$}};
\node[green!70!black]  at (12*\x,.3) {{\scriptsize $\alpha_3$}};
\filldraw[green!70!black] (12*\x,0) circle (3pt);

\node[left] at (25*\x+.2,-.3) {{\large  $25$}};
\node[teal]  at (25*\x,.3) {{\scriptsize $\beta_2$}};
\filldraw[teal] (25*\x,0) circle (3pt);

\node[left] at (28*\x+.2,-.3) {{\large  $28$}};
\node[teal]  at (28*\x,0.3) {{\scriptsize $\beta_3$}};
\filldraw[teal] (28*\x,0) circle (3pt);

\end{tikzpicture}
}
	\caption{Consider the sequence $\langle w_n \rangle_{n=0}^{\infty}$ given in Example~\ref{ex:p1}. Observe that 
$\beta_1 \, \preceq_{p} \, \alpha_1 \, \preceq_{p}\, \alpha_2  \, \preceq_{p} \, \alpha_3 \, \preceq_{p} \, \beta_2 \, \preceq_{p}\, \beta_3$ for all primes~$p\in \{17,23,29\}$.
The two families of intervals~$\overline{\rm (\beta_1,\alpha_1)}$ and $\overline{\rm (\alpha_2,\beta_3)}$
	are $s$-unbalanced, where only the latter 
	  is $s$-expanding.
	In particular, the distance between  residues of $\alpha_2$ and $\beta_3$ modulo $23$ is greater than their respective distance modulo $17$. The same holds for their distance modulo $29$ compared to their distance modulo $23$. The $s$-expanding $s$-unbalanced intervals for $17$, $23$ and $29$ are contiguous, which in turn ensures that for all $n\in \{5,\ldots, 27\}$ either 
	$17$ or $23$ or $29$ divides $w_n$. 
 See Example~\ref{ex:p1} for a more detailed discussion. }
	\label{fig:example}
\end{figure*}
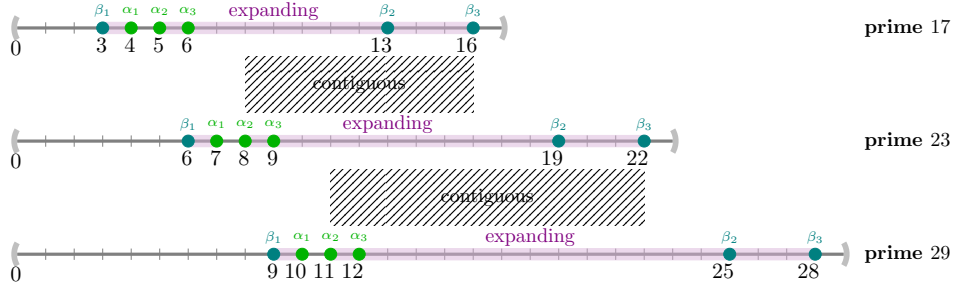

\paragraph{The ring~$\ZZ_{(p)}$.}

Let $p$ be a prime.
We denote by $v_p : \QQ \to \ZZ \cup \{\infty\}$ the $p$-adic
valuation on $\QQ$.  Recall that for a non-zero rational number~$x$,
the valuation~$v_p(x)$ is the unique integer such that 
$x$ can be written in the form
$x=p^{v_p(x)} \frac{a}{b}$ with $p \not \mid ab$. Following the standard convention,  define
$v_p(0):=\infty$.

We denote by $\ZZ_{(p)}$ the 
ring $\{ x \in \mathbb{Q} : v_p(x)\geq 0\}$.
Alternatively, we have
 $\ZZ_{(p)}=\{ \frac{a}{b} : a,b\in \mathbb{Z}, p \not \mid b\}$.
This is a local ring, whose unique maximal ideal is
the principal ideal $p\ZZ_{(p)}$.  The quotient
$\ZZ_{(p)}/p\ZZ_{(p)}$ is isomorphic to the finite field
$\mathbb{F}_p$.  Specifically, we consider the quotient map
$\rem_p : \ZZ_{(p)} \to \mathbb{F}_p$, given by
$$ \rem_p\left(\frac{a}{b}\right) := ab^{-1} \bmod p \, . $$
Henceforth, when we say that  $\frac{a}{b}\in \ZZ_{(p)}$ has $p$
as a prime divisor we refer to divisibility in $\ZZ_{(p)}$.


\subsection{Overview of the proof}
In this section we give a high-level overview of our main result.

As discussed in Section~\ref{subsec:shiftquo}, to prove
decidability of MP it remains to handle those instances in which the shift
quotient $r(x)\in \QQ(x)$ converges to $\pm 1$ as
$x\to\infty$.  In such instances, $r(x)$ is necessarily the quotient of two polynomials of equal degree.
Throughout this section, we fix such an instance of MP, consisting of a
 rational value~$t$ and a hypergeometric sequence $\langle u_n \rangle_{n=0}^\infty$  with rational parameters and rational initial value $u_0$.
                                                                                                      
We aim at computing a bound~$N$ such that all indices of $t$ in $\langle u_n \rangle_{n=0}^\infty$ are at most~$N$.  Our strategy is to find $N$ such 
that for all~$n > N$ there exists a prime
$p$ that is a prime divisor of~$u_n$ but not of $t$.
To explain this idea in more detail,
rewrite the shift quotient~$r(x)$ as
\[\frac{(x-\alpha_1)\cdots (x-\alpha_d)}{(x-\beta_1)\cdots (x-\beta_d)}\, , \]
 where the $\alpha_i$ and the $\beta_i$ are in~$\QQ\setminus \ZZ_{\geq 0}$.
We denote by $\As$ the multiset $\{\alpha_1,\ldots,\alpha_d\}$ consisting of all the (possibly repeated) roots of the numerator and by $\Bs$ the multiset $\{\beta_1,\ldots,\beta_d\}$ of the roots of the denominator.
Denote by $\Supp(C)$ the underlying set of a multiset~$C$.
For each element $x\in\Supp(C)$ we write $m_{C}(x)$
for its multiplicity in $C$.
We write $\As \uplus \Bs$ for the multiset with underlying set $\Supp(\As)\cup\Supp(\Bs)$ where the multiplicity of each of its elements~$x$  is $m_{\As}(x)+m_{\Bs}(x)$. 

Given a prime $p$, we have that
$$v_p(u_n) = v_p(u_0) + v_p\Big(\prod_{k=1}^{n}r(k)\Big)$$
for all $n \in \mathbb{N}$.  In particular,
if $v_p(t)=v_p(u_0)=0$ and 
$$
v_p\Big(\prod_{k=1}^{n}r(k)\Big) \neq 0 \, ,
$$
then $u_n \neq t$.
Furthermore the term 
$v_p\Big(\prod_{k=1}^{n}r(k)\Big) $
can be expanded as
\begin{equation}
\label{eq:summ}
 S_p (n) := \sum_{k = 1}^{n} \left( \sum_{\alpha\in \As} v_p(k-\alpha) - \sum_{\beta \in \Bs} v_p(k-\beta) \right)\, .
 \end{equation}
Thus, for all $n\in \mathbb{N}$ and all primes $p$ not dividing $u_0$ or $t$,  if $S_p(n)\neq 0$
then $u_n \neq t$.

\paragraph{The preorder $\preceq_r$.}
Given an integer prime~$p$, we define a preorder~$\preceq_p$ on~$\ZZ_{(p)}$ by writing
$\frac{a}{b}\preceq_p \frac{a'}{b'}$ if and only if $\rem_p(\frac{a}{b}) \leq \rem_p(\frac{a'}{b'})$, where $\leq $
is the usual order on~$\{0,\ldots,p-1\}$.

Denote by $b$ the least common denominator of all fractions in~$\As \uplus \Bs$. 
For every prime $p$ in  
 the arithmetic progression $b\NN+1$, all elements of
$\As \uplus \Bs$ are in $\ZZ_{(p)}$.
 In~\Cref{prop:order-neg} we show that for all sufficiently large 
 primes~$p\in b\NN+1$  the orders~$\preceq_p$ restricted to  
$\As \uplus \Bs$ are identical. 
We denote this common preorder by $\preceq_r$, where $r$ is 
the shift quotient of our fixed sequence.
 
 \paragraph{Unbalanced intervals}
Given an integer prime~$p$,
let $\frac{a}{b} \in \ZZ_{(p)}$ be such that $gcd(a,b)=1$. Note that 
$v_p(b)=0$, and for all $k\in \{1,\ldots,p-1\}$ such that $0<|kb-a|<p^2$ we have
\begin{equation}\label{eq:vprm}
v_p\left(k-\frac{a}{b}\right)=\begin{cases} 1 & \text{if }\rem_p (\frac{a}{b}) =
\rem_p(k),\\ 0 & \text{otherwise.}  \end{cases} \end{equation}

Recall that $\As \uplus \Bs \subseteq  \QQ\setminus \ZZ_{\geq 0}$, and assume that all its elements are given in a reduced form (i.e., $\gcd(a,b)=1$ for all $\frac{a}{b} \in \As \uplus \Bs$).
Let $p$ be a prime such that all elements of $\As \uplus \Bs$ are in $\ZZ_{(p)}$.  
Let~$n\in \{1,\ldots,p-1\}$ be such that
for all $k\in \{1,\ldots,n\}$, for all~$\frac{a}{b} \in \As \uplus \Bs$,
the inequalities 
$0<|kb-a|<p^2$ hold. 
In this case, by Equation~\eqref{eq:vprm}, 
the $p$-adic valuations in Equation~\eqref{eq:summ} 
all take value $0$ or $1$.
This means that $S_p(n)$  is non-zero
if and only if
\begin{equation} \label{eq:ineqSn}
	|\{\alpha\in \As : \alpha \preceq_p n\}| \;\neq\; |
\{\beta\in \Bs : \beta \preceq_p n \}|.
\end{equation}

We say that $n\in\mathbb{N}$ is
\emph{$p$-unbalanced} if $S_p(n)\neq 0$.  We extend this notion to 
sub-intervals of $\NN$ by saying that an interval $I\subseteq \mathbb{N}$ is \emph{$p$-unbalanced} if  all $n\in I$ are $p$-unbalanced.
By \Cref{eq:ineqSn}, the maximal $p$-unbalanced sub-intervals of~$\{1,\ldots,p-1\}$ will have endpoints that are equal or
adjacent to the images $\rem_p(\frac{a}{b})$ of~$\frac{a}{b}\in \As \uplus \Bs$; see~\Cref{ex:p1}. 

Let $\gamma$ and $\gamma'$ be distinct elements of $A\uplus B$ such 
that $\gamma  \prec_r \gamma'$.  We denote by 
$\overline{\rm (\gamma,\gamma')}$ the family of sub-intervals 
$\{n\in \NN : \rem_p(\gamma) \leq n < \rem_p(\gamma')\}$
of $\NN$
indexed by primes $p \in b\NN+1$ whose respective orders agree with $\prec_r$
(i.e., indexed by sufficiently large primes in $b\NN+1$).

We show that for such a family of sub-intervals, indexed by primes $p$, either every interval is $p$-unbalanced or none of the intervals is $p$-unbalanced.
In the former case we say that the family of intervals is $r$-unbalanced, 
where $r$ is the shift quotient.

\paragraph{Expanding families and contiguous intervals.}
Let $\gamma,\gamma' \in$ 
$\As \uplus \Bs$
such that
$\gamma\prec_r\gamma'$. 
In~\Cref{prop:p_adic_distance} we  prove that for a sufficiently large prime $p$ belonging to the arithmetic progression $b\NN+1$, the \emph{distance} $\rem_p(\gamma')-\rem_p(\gamma)$ between the respective residues of $\gamma$ and $\gamma'$ modulo $p$ is a strictly increasing function of~$p$, if and only if $\gamma-\gamma' \notin\ZZ$.  
In this case we say that the family of intervals
$\overline{\rm (\gamma,\gamma')}$ is \emph{$r$-expanding}.

The identification
of  expanding families of unbalanced intervals
is a crucial element in our proof. 
We further show that: 

\begin{proposition}\label{prop:reduction}
	Given  $r(x)\in \QQ(x)$ converging to $\pm 1$ as $x\to\infty$, either
	\begin{enumerate}
		\item there exists an $r$-expanding $r$-unbalanced family of intervals,
		or
\item  otherwise, every hypergeometric sequence $\langle u_n \rangle_{n=0}^\infty$ with shift quotient $r(x)$ is a rational function of $n$. 
	\end{enumerate}
\end{proposition}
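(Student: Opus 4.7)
I will list the distinct elements of $\As \uplus \Bs$ in $\preceq_r$-order as
$\gamma_1 \prec_r \cdots \prec_r \gamma_m$ and define the integer-valued walk
$f\colon\{0,1,\ldots,m\}\to\ZZ$ by $f(0)=0$ and
$f(k)-f(k-1)=m_{\As}(\gamma_k)-m_{\Bs}(\gamma_k)$, so that $f(m)=0$ since $|\As|=|\Bs|=d$.
By construction, for $i<j$ the family $\overline{(\gamma_i,\gamma_j)}$ is $r$-unbalanced iff
$f(k)\neq 0$ for every $k\in\{i,\ldots,j-1\}$. Writing $\gamma\sim\gamma'$
when $\gamma-\gamma'\in\ZZ$, \Cref{prop:p_adic_distance} says that the family
$\overline{(\gamma,\gamma')}$ is $r$-expanding iff $\gamma\not\sim\gamma'$. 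Thus Case~1
amounts to the existence of $i<j$ with $f$ nonzero on $[i,j-1]$ and $\gamma_i\not\sim\gamma_j$.
Assuming its negation, my plan is to match $\As$ with $\Bs$ by a bijection
$\sigma$ such that $\alpha_i-\beta_{\sigma(i)}\in\ZZ$ for every $i$; then the finite product
$\prod_{k=1}^n r(k)$ telescopes into a rational function of $n$, making every
$u_n=u_0\prod_{k=1}^n r(k)$ a rational function of $n$.

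Two structural observations follow from \Cref{prop:p_adic_distance}. First, for large
primes $p\in b\NN+1$, the $\rem_p$-distance between two elements $\gamma\sim\gamma'$ of
$\As\uplus\Bs$ equals $|\gamma-\gamma'|$ and is therefore uniformly bounded in $p$, while across
equivalence classes the distance grows without bound. Consequently the residues of each
equivalence class form a well-separated cluster in $\{1,\ldots,p-1\}$, and the chain
$\gamma_1\prec_r\cdots\prec_r\gamma_m$ partitions into contiguous \emph{blocks}, one per
equivalence class met by $\As\uplus\Bs$. Second, under the negation of Case~1 every
\emph{excursion} of $f$---a maximal range $[s,t]$ with $f\neq 0$ on $[s,t]$ and hence
$f(s-1)=f(t+1)=0$---is confined to a single block. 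Indeed, for each $k\in\{s,\ldots,t\}$
the family $\overline{(\gamma_k,\gamma_{k+1})}$ is $r$-unbalanced (as $f(k)\neq 0$),
hence by assumption not $r$-expanding, yielding $\gamma_k\sim\gamma_{k+1}$; so
$\gamma_s\sim\gamma_{s+1}\sim\cdots\sim\gamma_{t+1}$ all lie in one block.

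An induction on blocks, starting from $f(0)=0$, now gives $f=0$ at every block boundary:
if $f(t_l)\neq 0$ at the last index $t_l$ of some block $B_l$, then the excursion through $t_l$ would necessarily
extend into $B_{l+1}$, contradicting the confinement just proved. It follows that for every equivalence
class $C$, $|\As\cap C|=|\Bs\cap C|$ as multisets, and any pairing within each class produces
the desired bijection $\sigma$. For every pair,
$\prod_{k=1}^n(k-\alpha_i)/(k-\beta_{\sigma(i)})$ telescopes to a rational function of $n$,
since $\alpha_i-\beta_{\sigma(i)}\in\ZZ$ lets one re-index one of the products to align with the other
up to a polynomial in $n$ (a standard Pochhammer / $\Gamma$-function computation starting from
$\prod_{k=1}^n(k-\alpha)=(-1)^n\Gamma(\alpha)/\Gamma(\alpha-n)$). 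Taking the product over all
pairs establishes alternative~(2). I expect the excursion-confinement step of the second
paragraph to be the principal obstacle: it forces one to apply the negation of Case~1 to
\emph{every} consecutive sub-family $\overline{(\gamma_k,\gamma_{k+1})}$ inside an excursion
and combine the resulting chain of $\sim$-relations with the block structure of
\Cref{prop:p_adic_distance}, so as to show that $f$ resets exactly at block boundaries and not merely within them.
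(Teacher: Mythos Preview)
Your proposal is correct and follows essentially the same route as the paper's proof in Appendix~\ref{appendix:reduction}: both arguments pivot on the dichotomy ``either there is a bijection $A\to B$ pairing roots at integer distance, in which case the product telescopes to a rational function of $n$, or else some consecutive pair $\gamma_j\prec_r\gamma_{j+1}$ gives an $r$-expanding $r$-unbalanced family.'' Your presentation via the integer walk $f$ and its excursions makes the contrapositive step (no expanding unbalanced family $\Rightarrow$ bijection) more explicit than the paper's sketch, but the substance is identical; one small simplification is that the block structure (equivalence classes under $\sim$ are contiguous in $\prec_r$) follows directly from \Cref{prop:order-neg} rather than needing the distance analysis from \Cref{prop:p_adic_distance}.
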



For the case when  $\langle u_n \rangle_{n=0}^\infty$ is a rational function of $n$, we can rewrite it as $u_n=\frac{f(n)}{g(n)}$ with $f,g\in \QQ[x]$. In order to test  membership of~$t$, it  suffices to check  whether the polynomial $f(x)-tg(x)$ has an integer root. For more details and the proof of \Cref{prop:reduction}, see \Cref{appendix:reduction}.
Henceforth, we assume without loss of generality that there exists 
an $r$-expanding $r$-unbalanced family of intervals for our fixed instance of~MP.

Pick $\gamma, \gamma' \in \As \uplus \Bs$ such that  $\overline{\rm (\gamma,\gamma')}$ is  
an $r$-expanding $r$-unbalanced family of  intervals.
Let $p,q \in b\NN+1$ be primes sufficiently large that their respective orders agree with $\prec_r$. In \Cref{prop:constructqfromp} we show that if $p< q < p(1 + \frac{1}{b})+C$ with $C$ a constant depending only on $r$, the respective intervals between $\gamma$ and $\gamma'$ for primes $p$ and $q$ are \emph{contiguous}. That is, we show that $\rem_{q}(\gamma) \leq \rem_{p}(\gamma')$ where $\leq$ is the usual order on~$\ZZ$.
 By previous arguments,  for all~$n$ with
 $\rem_p(\gamma) \leq n < \rem_{q}(\gamma')$ either $v_p(u_n) \neq 0$ or $v_q(u_n) \neq 0$.

%

From the instance of MP we now construct an $r$-expanding $r$-unbalanced 
family of intervals that covers $\{ n \in \NN : n > N\}$
for some effectively computable finite bound $N$.
This will conclude our conceptually simple proof for decidability of MP.

\paragraph{An infinite sequence of primes with contiguous 
unbalanced intervals.} 
We use effective bounds on the density of primes in arithmetic progressions to  construct a sequence $\langle p_i \rangle_{i = 0}^{\infty}$ of primes with 
contiguous $r$-unbalanced intervals. 
 Intuitively speaking, given a prime $p_i \in b\mathbb{N}+1$, we would need  $p_{i+1} <  p_i(1 + \frac{1}{b})+C$ with $C$ a fixed constant depending on~$r$. We prove that if $p_i$ is large enough there always exists another prime $p_{i+1}\in b\mathbb{N}+1$  where $p_{i+1} < p_i (1+\frac{1}{b})+C$. See \Cref{prop:primes_progression} for more details.

\begin{example}\label{ex:p1}
Consider the sequence $\langle w_n \rangle_{n=0}^\infty$ 
defined by $w_0=1$ and the shift quotient 
\[s(x):=\frac{(x+\frac{9}{2})(x+\frac{7}{2})(x+\frac{5}{2})}{(x+\frac{11}{2})(x+4)(x+1)}.\]
The rational function~$s(x)$ converges to~$1$ from above as $x \to \infty$. This implies that the sequence $\langle w_n \rangle_{n=0}^\infty$ is 
monotonically increasing to its limit value, that is 
\[
        \prod_{k=0}^\infty \frac{(k+\frac{9}{2})(k+\frac{7}{2})(k+\frac{5}{2})}{(k+\frac{11}{2})(k+4)(k+1)}
            =\frac{\Gamma(\frac{11}{2})\Gamma(4)\Gamma(1)}{\Gamma(\frac{9}{2})\Gamma(\frac{7}{2})\Gamma(\frac{5}{2})}=\frac{3\cdot 2^5}{5\pi}.
\] 

We give two arguments that~$\frac{13}{6}$ does not lie in the sequence. 
First, using the fact $\langle w_n\rangle_{n=0}^\infty$ is strictly increasing,
it suffices to observe that $w_6 > \frac{13}{6}$ and that none of $w_0,\ldots,w_5$
equals $\frac{13}{6}$.
Such an argument is possible because $w_n$ does not converge to~$\frac{13}{6}$. 
Next, we prove the non-membership of~$\frac{13}{6}$ in the sequence using our approach based on prime divisors of the $w_n$. 
To this end, let 
\begin{align*}
	\alpha_1:=\frac{-9}{2}  &\quad \quad & \alpha_2:= \frac{-7}{2} &\quad \quad &  \alpha_3:=\frac{-5}{2}\\
	 \beta_1:=\frac{-11}{2}  &\quad \quad &  \beta_2:= -4 &\quad \quad &  \beta_3:=-1
\end{align*}

Considering that  $v_{13}(\frac{13}{6})=1$, in our approach,  we  use  larger primes  to rule out the membership of~$\frac{13}{6}$.
For the prime~$17$, as depicted in~\Cref{fig:example}, we have  that  
\[\beta_1 \, \preceq_{17} \, \alpha_1 \, \preceq_{17}\, \alpha_2  \, \preceq_{17} \, \alpha_3 \, \preceq_{17} \, \beta_2 \, \preceq_{17}\, \beta_3.\]
The maximal $17$-unbalanced intervals   in our example are $\{3\}$ and $\{5,6,\ldots, 15\}$.
This implies that, for $n\in \{1,\ldots,16\}$,  $S_{17}(n)$ is non-zero  if and only if $n$ belongs to~$\{3\} \cup \{5,\ldots, 15\}$.

As it turns out $17$ is  sufficiently large so that, for all primes $p\geq 17$,  the respective order $\prec_p$   agrees with $\prec_s$. Consequently,
the families of intervals 
$\overline{\rm (\beta_1,\alpha_1)}$ and $\overline{\rm (\alpha_2,\beta_3)}$
are $s$-unbalanced.
Furthermore, the family  
$\overline{\rm (\alpha_2,\beta_3)}$ is $s$-expanding, whereas
$\overline{\rm (\beta_1,\alpha_1)}$  is not. As shown in Figure~\ref{fig:example},
the  intervals between  $\alpha_2$ and $\beta_3$ are contiguous for primes in $\{17, 23,29\}$.
This, in turn, ensures that for all $n\in \{5,\ldots, 27\}$ either 
	$17$, $23$, or $29$ divides $w_n$. 

By the main theorem in~\cite{Nagura}, for all primes $p\geq 8$, there exists another prime less than $p(1+\frac{1}{2})$. This, in combination with the above, guarantees that for all $n\geq 5$, there exists a prime
$p$ other than $13$ appearing in the factorisation of~$w_n$. This reduces the membership test of $\frac{13}{6}$ to the finite set $\{w_1,\ldots, w_4\}$.
\end{example}

\subsection{Technical Lemmas}

Recall that the elements in $\As \subseteq \QQ\setminus\ZZ_{\geq 0}$ are the roots of the numerator, and the elements in $\Bs \subseteq \QQ\setminus\ZZ_{\geq 0}$ are the roots of the denominator of the shift quotient $r(x)$ of the fixed sequence $\langle u_n \rangle_{n=0}^\infty$.

Let $b>0$ be the least common denominator of the fractions in~$\As \uplus \Bs$. 
Then every element 
$\gamma \in \As \uplus \Bs$
 admits a unique representation in the form $\gamma=c-\frac{a}{b}$ under the conditions
$c \in \mathbb{Z}$ and $a \in \{1,\ldots,b\}$.
We call this the \emph{canonical representation}.
Associated with this, define a nonnegative integer
$$ N_\gamma := \left\{
    \begin{array}{ll} \max\left(c,\lceil\frac{bc}{b-a}\rceil\right) & \mbox{ if $c\geq 1$,} \\
      \max\left(-c,\lceil\frac{-bc}{a}\rceil\right) & \mbox{ if $c \leq 0$.}
                                          \end{array}\right . $$
Note that $N_\gamma$ is well-defined, i.e., there is no division by zero.
Indeed, if $c\geq 1$, by the convention that $\gamma \not\in \mathbb{Z}_{\geq 0}$, the value $b-a$ is non-zero, whereas  $a\neq 0$ ensures well definedness in the second case.

\begin{proposition}
\label{prop:representation}
Let $\gamma=c-\frac{a}{b}$ be a canonical representation.
Then for all primes $p > N_\gamma$ such that $p \in b\mathbb{N}+1$
we have
$\rem_p(\gamma) = c +\frac{(p-1)a}{b}$.
\end{proposition}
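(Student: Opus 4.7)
The plan is to evaluate $\rem_p(\gamma)$ by inverting $b$ modulo $p$, and then show that the resulting candidate integer $c + (p-1)a/b$ already lies in the residue interval $\{0, 1, \ldots, p-1\}$ whenever $p > N_\gamma$, so it is the residue itself.

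First I would compute $b^{-1} \bmod p$. Since $p \in b\NN + 1$, write $p = bk + 1$ with $k = (p-1)/b \in \NN$. The identity $b\cdot k \equiv -1 \pmod{p}$ then gives $b^{-1} \equiv -(p-1)/b \pmod{p}$. Substituting into the canonical representation of $\gamma$ yields
$$\rem_p(\gamma) = \rem_p\!\left(c - \tfrac{a}{b}\right) \equiv c + \frac{a(p-1)}{b} \pmod{p},$$
and the right-hand side is an integer because $b \mid p-1$. The proposition then reduces to showing that $0 \leq c + \frac{a(p-1)}{b} \leq p-1$ whenever $p > N_\gamma$.

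I would then split on the sign of $c$ to match the two cases in the definition of $N_\gamma$. If $c \geq 1$, the hypothesis $\gamma \notin \ZZ_{\geq 0}$ forces $a < b$ (otherwise $\gamma = c - 1 \in \ZZ_{\geq 0}$), so $b - a > 0$. The lower bound $c + a(p-1)/b \geq 0$ is immediate, while the upper bound $c + a(p-1)/b < p$ rearranges to $p > (bc-a)/(b-a)$; this follows from $p > N_\gamma \geq \lceil bc/(b-a)\rceil \geq bc/(b-a) > (bc-a)/(b-a)$, the last inequality using $a \geq 1$. If $c \leq 0$, the upper bound $c + a(p-1)/b \leq p-1$ is immediate from $a \leq b$ and $c \leq 0$. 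For the lower bound, $p > N_\gamma \geq \lceil -bc/a\rceil$ is a strict inequality between integers, so $p \geq \lceil -bc/a\rceil + 1 \geq -bc/a + 1$; multiplying through by $a$ and rearranging gives $a(p-1) \geq -bc$, hence $c + a(p-1)/b \geq 0$.

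There is no genuine obstacle: the formula for $N_\gamma$ is engineered precisely so that these arithmetic inequalities pinch $c + a(p-1)/b$ into $\{0, \ldots, p-1\}$. The only care required is checking that the denominators $b-a$ (in the case $c \geq 1$) and $a$ (in the case $c \leq 0$) appearing in $N_\gamma$ are nonzero, which both follow from $\gamma \notin \ZZ_{\geq 0}$ together with $a \in \{1, \ldots, b\}$.
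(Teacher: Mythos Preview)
Your proof is correct and follows essentially the same approach as the paper: both compute $b^{-1}\bmod p$ from $p\in b\NN+1$, reduce the claim to showing $c+\tfrac{(p-1)a}{b}\in\{0,\ldots,p-1\}$, and verify this by splitting on the sign of $c$ and unpacking the definition of $N_\gamma$. Your handling of the inequalities is slightly more explicit about the strict-versus-nonstrict integer arithmetic, but the argument is otherwise identical.
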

\begin{proof}
  The assumption that $p \in b\mathbb{N}+1$ implies that
  $\rem_p\left(-\frac{1}{b}\right)=\frac{p-1}{b}$.
  Thus, by the homomorphism property of $\rem_p$,
  it only remains to verify that $c+\frac{(p-1)a}{b} \in \{0,\ldots,p-1\}$.  To this end, there are two cases, following the definition of $N_\gamma$.

The first case is that $c \geq 1$.  Here we clearly have $c+\frac{(p-1)a}{b} \geq 0$.  Furthermore, by the assumption $p > N_\gamma$, we have $p-1 \geq \frac{bc}{b-a}$.  Recall also that $a\neq b$ due to the assumption that $\gamma \not\in \mathbb{Z}_{\geq 0}$.  Thus, multiplying the previous inequality by $b-a>0$, we have $(b-a)(p-1) \geq bc$.  Dividing by $b$ and rearranging terms, we conclude that $c + \frac{(p-1)a}{b} \leq p-1$.

The second case is that $c \leq 0$.  Here, since $a \leq b$, it is clear that $c+\frac{(p-1)a}{b} \leq p-1$.  Furthermore, by the assumption $p > N_\gamma$, we have $p-1 \geq \frac{-bc}{a}$.  Multiplying the latter inequality by $\frac{a}{b}$ and rearranging terms, we get $c+\frac{(p-1)a}{b} \geq 0$.

\end{proof}

Next we use \Cref{prop:representation} to show that, given distinct
$\gamma$ and $\gamma'$ in $\As \uplus \Bs$,
 for all large enough primes $p$ in the arithmetic progression $b\NN + 1$, their order respective to $\prec_p$ is fixed.

\begin{proposition}\label{prop:order-neg}
Let $\gamma=c-\frac{a}{b}$ and $\gamma'=c'-\frac{a'}{b}$  be canonical representations.
For all primes $p > b(N_{\gamma}+ N_{\gamma'}) + 1$ such that $p \in b\mathbb{N}+1$ we have:
\[ \gamma \prec_p \gamma' \text{ if and only if } ((a < a') 
	\text{ or } (a=a' \text{ and } c<c')). \]
\end{proposition}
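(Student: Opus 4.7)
The plan is to reduce the comparison $\gamma \prec_p \gamma'$ to a linear inequality in $c,c',a,a',b,p$ by invoking Proposition~\ref{prop:representation}, and then to dispatch the result via a short case analysis on the sign of $a'-a$. Note first that from the definition, $N_\gamma \geq |c|$ in both branches (if $c\geq 1$ then $N_\gamma \geq c$, and if $c\leq 0$ then $N_\gamma \geq -c$). In particular, under the hypothesis $p > b(N_\gamma + N_{\gamma'})+1$, and using $b\geq 1$, we have $p > N_\gamma$ and $p > N_{\gamma'}$, so Proposition~\ref{prop:representation} applies to both and yields
\[
\rem_p(\gamma) \;=\; c + \tfrac{(p-1)a}{b}, \qquad \rem_p(\gamma') \;=\; c' + \tfrac{(p-1)a'}{b}.
\]
Hence $\gamma \prec_p \gamma'$ is equivalent, after clearing denominators, to the integer inequality $(c-c')\,b \;<\; (p-1)(a'-a)$.

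The core of the proof is then the case split. When $a < a'$ we have $a'-a \geq 1$, so the right-hand side is at least $p-1$; meanwhile $(c-c')b \leq b(N_\gamma + N_{\gamma'}) < p-1$ by hypothesis, so the inequality holds and $\gamma \prec_p \gamma'$. When $a > a'$, we have $a'-a \leq -1$ so the right-hand side is at most $-(p-1)$; the same crude bound gives $(c-c')b \geq -b(N_\gamma+N_{\gamma'}) > -(p-1)$, so the inequality fails and $\gamma \not\prec_p \gamma'$. The case $a = a'$ reduces the inequality to $(c-c')b < 0$, that is, $c < c'$. Combining the three cases yields exactly the stated equivalence: $\gamma \prec_p \gamma'$ iff $(a<a')$ or $(a=a'$ and $c<c')$.

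The only mild obstacle is the bookkeeping: one must check that the threshold $b(N_\gamma+N_{\gamma'})+1$ is simultaneously large enough for Proposition~\ref{prop:representation} to apply (requiring $p > N_\gamma$ and $p > N_{\gamma'}$, which is automatic since $b\geq 1$), and large enough that the ``error term'' $b(c-c')$ is dominated by $p-1$ in the two non-trivial cases. Both are immediate from $N_\gamma \geq |c|$. Apart from this, no new input is needed; the statement is really a quantitative unpacking of Proposition~\ref{prop:representation}.
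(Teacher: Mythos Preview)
Your proof is correct and follows essentially the same approach as the paper: apply Proposition~\ref{prop:representation} to reduce $\gamma \prec_p \gamma'$ to the linear inequality $(c-c')b < (p-1)(a'-a)$, then use the bound $N_\gamma \geq |c|$ together with the hypothesis on $p$ to dispatch the three cases $a<a'$, $a>a'$, $a=a'$. Your presentation is in fact a bit cleaner than the paper's, which splits into two directions and argues one of them by contradiction, whereas you handle both directions simultaneously via the single three-way case split.
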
 

\begin{proof}
For the first direction, assume that $\gamma \prec_p \gamma'$. Following \Cref{prop:representation}, since $p > N_{\gamma}+ N_{\gamma'}$ and $p\in b\NN + 1$, we can rewrite the assumption as $c + \frac{(p-1)a}{b} < c' + \frac{(p-1)a'}{b}$. We can rearrange the inequality to obtain
\[ cb - c'b + a' - a < p(a' - a). \]
Towards a contradiction, assume that  $a > a'$. In this case, the above yields  $\frac{c - c'}{a' - a}b + 1 > p$. Since $N_{\gamma} \geq |c|$ and $N_{\gamma'} \geq |c'|$, by the  assumption that $p > b(N_{\gamma}+ N_{\gamma'}) + 1$ we have that $p > (c - c')b + 1 > \frac{c - c'}{a' - a}b + 1$, a contradiction. Again, towards a contradiction, assume that $a = a'$ but $c \geq c'$. Since $b>0$ we can multiply the inequality by $b$ to get $cb \geq c'b$, a contradiction. The claim follows.
 
To show the other direction of the equivalence, we need to look at two cases depending on $a$ and $a'$.
First assume that $a < a'$. By $p > b(N_{\gamma}+ N_{\gamma'}) + 1$ we can write
\[
 p > (c - c')b + 1 \geq \frac{c - c'}{a' - a}b + 1 = \frac{cb - c'b + a' - a}{a' - a}. 
\]
Since $a' - a > 0$, we can multiply the above inequality by $(a'-a)$ to obtain
 $p(a' - a) > cb - c'b + a' - a$. 
By rearranging the terms, we get
$(p-1)a + cb < (p-1)a' + c'b$.  
As $p > N_{\gamma}+ N_{\gamma'}$ and $p\in b\NN +1$ by \Cref{prop:representation}, it follows that $\gamma \prec_p \gamma'$.

For the second case, assume $a = a'$ and $c < c'$. Since $b>0$, we can write $cb < c'b $. It follows that
$(p-1)a + cb < (p-1)a + c'b$. 
Again, since  $p > N_{\gamma} + N_{\gamma'}$ and $p\in b\NN +1$ by \Cref{prop:representation}, it follows that $\gamma \prec_p \gamma'$.
\end{proof}

Associated to the shift quotient~$r(x)$ of the sequence~$\langle u_n \rangle_{n=0}^\infty$, define the nonnegative integer 
\begin{equation}\label{eq:nr}
	N_r:=b\sum_{\gamma\in \As\uplus \Bs} N_{\gamma}+1.
\end{equation}
  
From \Cref{prop:order-neg} it follows that for all primes $p >N_r$ in the arithmetic progression $b\NN + 1$
\begin{itemize}
	\item $\rem_p(\gamma)$, $\rem_p(\gamma')$ are distinct for all distinct $\gamma, \gamma' \in \As \uplus \Bs$,
	\item the orders $\preceq_p$ on $\As \uplus \Bs$ are identical.
\end{itemize}

We henceforth denote by $\preceq_r$ the common order on 
$\As \uplus\Bs$
for all primes $p >N_r$ in the arithmetic progression $b\NN + 1$.
%
Let $\gamma,\gamma' \in \As \uplus \Bs$.
In the following proposition we show that  for larger and larger primes $p\in b\NN+1$ whose orders agree with $\prec_r$, the distance between  $\rem_p (\gamma)$ and $\rem_p (\gamma')$ gets larger and larger
if and only if $\gamma - \gamma' \notin \ZZ$.

%

\begin{proposition}\label{prop:p_adic_distance}
Let $p, q \in b\mathbb{N}+1$ be primes with $q> p> N_r$.
Let $\gamma,\gamma' \in \As \uplus \Bs$ be such that 
 $\gamma \prec_r \gamma'$. Then $\gamma - \gamma' \notin \ZZ$ if and only if 
 	\[ \rem_{p}(\gamma') - \rem_{p}(\gamma) < \rem_{q}(\gamma') - \rem_{q}(\gamma)\]
	where $<$ is the total order on~$\ZZ$.
\end{proposition}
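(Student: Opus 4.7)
The plan is to reduce everything to the explicit formula for $\rem_p(\gamma)$ furnished by Proposition~\ref{prop:representation} and then read off the dependence on $p$. Write the canonical representations $\gamma = c - \frac{a}{b}$ and $\gamma' = c' - \frac{a'}{b}$ with $c,c' \in \mathbb{Z}$ and $a, a' \in \{1,\ldots,b\}$. Since $p, q > N_r \geq b(N_\gamma + N_{\gamma'}) + 1$ and $p, q \in b\mathbb{N}+1$, Proposition~\ref{prop:representation} yields
\[
    \rem_p(\gamma') - \rem_p(\gamma) \;=\; (c' - c) + \frac{(p-1)(a' - a)}{b},
\]
and the analogous identity for $q$. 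Subtracting gives the key identity
\[
    \bigl(\rem_q(\gamma') - \rem_q(\gamma)\bigr) - \bigl(\rem_p(\gamma') - \rem_p(\gamma)\bigr) \;=\; \frac{(q-p)(a'-a)}{b}.
\]

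Next, I would invoke Proposition~\ref{prop:order-neg} to characterise the assumption $\gamma \prec_r \gamma'$: it gives either (i) $a < a'$, or (ii) $a = a'$ and $c < c'$. Since $q > p$ and $b > 0$, the displayed identity is strictly positive in case (i) and zero in case (ii). Thus $\rem_p(\gamma') - \rem_p(\gamma) < \rem_q(\gamma') - \rem_q(\gamma)$ holds precisely in case (i), i.e.\ precisely when $a \neq a'$.

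It then remains to match case (i) with the condition $\gamma - \gamma' \notin \mathbb{Z}$. Note that $\gamma - \gamma' = (c-c') + \frac{a'-a}{b}$. In case (ii) we have $\gamma - \gamma' = c - c' \in \mathbb{Z}$. In case (i), $a, a' \in \{1,\ldots,b\}$ with $a < a'$ gives $a' - a \in \{1,\ldots,b-1\}$, so $\frac{a'-a}{b} \in (0,1)$ and hence $\gamma - \gamma' \notin \mathbb{Z}$. Combining these two dichotomies establishes the claimed equivalence.

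There is no serious obstacle here: once Proposition~\ref{prop:representation} is in hand, everything is a direct affine computation in $p$, with Proposition~\ref{prop:order-neg} supplying the translation between the preorder $\prec_r$ and the numerator of the canonical representation. The only point requiring a bit of care is verifying that the lower bound $p > N_r$ is strong enough to invoke both auxiliary propositions simultaneously, which follows immediately from the definition~\eqref{eq:nr} of $N_r$.
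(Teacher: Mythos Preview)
Your proof is correct and follows essentially the same approach as the paper: both rely on Proposition~\ref{prop:representation} to obtain the explicit affine formula for $\rem_p(\gamma')-\rem_p(\gamma)$, on Proposition~\ref{prop:order-neg} to translate $\gamma\prec_r\gamma'$ into the dichotomy $a<a'$ versus $a=a'$, and on the observation that $\gamma-\gamma'\in\ZZ$ precisely when $a=a'$. Your presentation is slightly more streamlined in that you isolate the single identity $\frac{(q-p)(a'-a)}{b}$ and read off both directions from its sign, whereas the paper treats the forward and backward implications separately; but this is a cosmetic difference, not a substantive one.
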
 

\begin{proof}
For the first direction, assume that $\gamma - \gamma' \notin \ZZ$. Given the canonical representations $\gamma = c - \frac{a}{b}$ and $\gamma' = c' - \frac{a'}{b}$, this implies that $a \neq a'$. Now since $\gamma \prec_r \gamma'$, by \Cref{prop:order-neg} we have $a<a'$.

Since $p < q$ and $b > 0$, we can write
$\frac{p - 1}{b} < \frac{q - 1}{b}$. 
We can multiply the inequality by $(a'-a)$ to obtain
\[ \frac{p - 1}{b}(a' - a) < \frac{q - 1}{b}(a' - a).
			\]
Now by adding $(c' - c)$ on both sides of the above inequality we get
\[ c' + \frac{p - 1}{b}a' - \left(c + \frac{p - 1}{b}a \right) < 
				c' + \frac{q - 1}{b}a' - \left(c + \frac{q - 1}{b}a \right).
			\]
By the assumption that $p, q \in b\mathbb{N}+1$ with $p, q > N_r$, we can use \Cref{prop:representation} to conclude that
\[ \rem_{p}(\gamma') - \rem_{p}(\gamma) < \rem_{q}(\gamma') - \rem_{q}(\gamma).\]

For the other direction, assume that $\rem_{p}(\gamma') - \rem_{p}(\gamma) < \rem_{q}(\gamma') - \rem_{q}(\gamma)$. Towards a contradiction, assume furthermore that $\gamma - \gamma' \in \ZZ$.
Using the canonical representations $\gamma = c - \frac{a}{b}$ and $\gamma' = c' - \frac{a'}{b}$, by \Cref{prop:representation} we can rewrite the inequality as:
\[ \left(c' + \frac{p - 1}{b}a'\right) - \left(c + \frac{p - 1}{b}a \right) < \left(c' + \frac{q - 1}{b}a'  \right) -  \left(c + \frac{q - 1}{b}a \right).
\]
The inequality simplifies to 
\[ \frac{p - 1}{b}(a'-a) < \frac{q - 1}{b}(a' - a).
\]
The assumption $\gamma - \gamma' \in \ZZ$ implies that $a = a'$, which with the above gives $0 < 0$, a contradiction.
\end{proof}	


Let $\gamma, \gamma' \in \As \uplus \Bs$ be such that  $\overline{\rm (\gamma,\gamma')}$ is  
an $r$-expanding $r$-unbalanced family of intervals.
For a prime $p \in b\mathbb{N}+1$  with $p> N_r$,  we  obtain a condition on larger primes
$q\in b\NN+1$  that ensures the 
 intervals  $\{n \in \NN : \rem_p(\gamma) \leq n < \rem_p(\gamma') \}$ and $\{n \in \NN : \rem_q(\gamma) \leq n < \rem_q(\gamma') \}$ are contiguous.


\begin{proposition}\label{prop:constructqfromp}
Let $p, q \in b\mathbb{N}+1$ be primes with $q> p> N_r$.
Let $\gamma,\gamma' \in \As \uplus \Bs$ such that $\gamma - \gamma' \notin \ZZ$ and 
 $\gamma \prec_r \gamma'$.
 We have 
	$\rem_{q}(\gamma) < \rem_{p}(\gamma')$ if 
	 \[q<p+\frac{p-1}{b}+ C\] 
	for some effective constant $C$ depending on $\gamma$ and $\gamma'$.
\end{proposition}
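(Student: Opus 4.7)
The plan is to reduce the desired inequality to a comparison of integers via the explicit formulas from \Cref{prop:representation}, and then read off an upper bound on $q - p$. First I would write $\gamma$ and $\gamma'$ in canonical form as $\gamma = c - \tfrac{a}{b}$ and $\gamma' = c' - \tfrac{a'}{b}$. Since $a, a' \in \{1, \ldots, b\}$, divisibility of $a' - a$ by $b$ is equivalent to $a = a'$, so the hypothesis $\gamma - \gamma' \notin \ZZ$ rules out $a = a'$. Combined with $\gamma \prec_r \gamma'$ and \Cref{prop:order-neg}, this forces $a < a'$. Setting $\delta := a' - a$, we thus have $\delta \geq 1$ and $a \leq b - 1$.

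Next, since $p, q > N_r$ and both lie in $b\NN + 1$, I would invoke \Cref{prop:representation} to rewrite $\rem_p(\gamma') = c' + \tfrac{(p-1)a'}{b}$ and $\rem_q(\gamma) = c + \tfrac{(q-1)a}{b}$. Multiplying the target inequality $\rem_q(\gamma) < \rem_p(\gamma')$ through by $b$ and rearranging yields the equivalent integer condition
\[
(q - p)\,a \;<\; b(c' - c) + (p - 1)\delta,
\]
which, on dividing by $a > 0$, becomes $q < p + \tfrac{b(c'-c)}{a} + \tfrac{(p-1)\delta}{a}$.

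The crucial observation is the simple bound $\tfrac{\delta}{a} \geq \tfrac{1}{b}$, which holds because $\delta \geq 1$ and $a \leq b$. This lets me weaken the sufficient condition to the cleaner form
\[
q \;<\; p + \frac{p-1}{b} + \frac{b(c' - c)}{a},
\]
so taking $C := \tfrac{b(c'-c)}{a}$---an effective constant depending only on the canonical representations of $\gamma$ and $\gamma'$---suffices.

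The main subtlety lies in the very first step, namely ruling out $a = a'$ under the hypothesis $\gamma - \gamma' \notin \ZZ$; this rests on the fact that $|a - a'| < b$ whenever $a, a' \in \{1, \ldots, b\}$. Everything downstream is then bookkeeping around the linear formula for $\rem_p$ supplied by \Cref{prop:representation}, together with the single inequality $\tfrac{\delta}{a} \geq \tfrac{1}{b}$.
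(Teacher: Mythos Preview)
Your argument is correct and follows essentially the same route as the paper: both proofs use the canonical representations and \Cref{prop:representation} to reduce the inequality $\rem_q(\gamma) < \rem_p(\gamma')$ to an arithmetic inequality, invoke \Cref{prop:order-neg} together with $\gamma - \gamma' \notin \ZZ$ to get $a < a'$, and then use the bound $\tfrac{a'-a}{a} \geq \tfrac{1}{b}$ to arrive at the same constant $C = \tfrac{b(c'-c)}{a}$. The only cosmetic difference is direction---the paper works forward from the hypothesis on $q$, whereas you work backward from the desired conclusion---but the content is identical.
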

\begin{proof}
Given the canonical representations $\gamma = c - \frac{a}{b}$ and $\gamma' = c' - \frac{a'}{b}$,
write $C := \frac{c' - c}{a}b$. Then the initial assumption can be rewritten as
\[ q < p+\frac{p-1}{b}+ \frac{c' - c}{a}b. \]
We can further rewrite the above inequality as
\[ q-1 < (p-1)\big(1+\frac{1}{b}\big) + \frac{c' - c}{a}b. \]

Note that since $\gamma - \gamma' \notin \ZZ$, we have $a \neq a'$. Now by \Cref{prop:order-neg}, from $\gamma \prec_r \gamma'$ it follows that $a<a'$.  By further recalling that $a<a'\leq b$, we have $\frac{a'}{a} = \big(1 + \frac{a'-a}{a}\big) \geq \big(1 + \frac{1}{b}\big)$. Then from the above it follows that
\[ q-1 < (p-1)\frac{a'}{a} + \frac{c' - c}{a}b. \]
Since $\frac{a}{b} > 0$, we can multiply the inequality by $\frac{a}{b}$ to obtain
\[ \frac{(q-1)a}{b} + c  < \frac{(p-1)a'}b + c'. \]
By the assumption that $p, q \in b\mathbb{N}+1$ with $p, q > N_r$, we can now use \Cref{prop:representation} to conclude that $\rem_{q}(\gamma) < \rem_{p}(\gamma')$.
\end{proof}

\subsection{The Main Theorem}

In order to prove our main theorem, we first need a preliminary result on the number of primes in an interval in an arithmetic progression. To this end, we rely on effective bounds on the density of primes in an arithmetic progression.

Given coprime numbers $a,n \in \NN$ with $a<n$, write $\pi_{n,a}(x)$ for the number of primes less than $x$ that are congruent to $a$ modulo $n$. The following estimates on $\pi_{n,a}(x)$ can be found in \cite[Theorem~1.3]{Bennett}. 

\begin{theorem}
\label{th:density}
	Given $n\geq 3$ and $a\in\NN$ coprime to $n$, there exist explicit positive constants $c$ and $x_0$ depending on $n$ such that
	\[ \left| \pi_{n,a}(x) - \frac{Li(x)}{\varphi(n)}\right| < c\frac{x}{(\log x)^2} \text{ for all } x > x_0. \]
\end{theorem}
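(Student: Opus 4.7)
The plan is to follow the classical route to Siegel--Walfisz while tracking every constant explicitly. First I would pass from $\pi_{n,a}(x)$ to the restricted Chebyshev sum
$\psi(x;n,a) := \sum_{m\le x,\; m\equiv a\,(n)} \Lambda(m)$
(here $\Lambda$ is the von Mangoldt function), absorbing the $O(\sqrt{x})$ prime-power contribution and then using Abel summation to convert the eventual bound on $\psi$ into one comparing $\pi_{n,a}(x)$ with $\mathrm{Li}(x)/\varphi(n)$. Next, orthogonality of the Dirichlet characters modulo $n$ yields
\[
 \psi(x;n,a)=\frac{1}{\varphi(n)}\sum_{\chi \bmod n}\overline{\chi(a)}\,\psi(x,\chi),
 \qquad \psi(x,\chi)=\sum_{m\le x}\chi(m)\Lambda(m).
\]
The principal character contributes the main term $x/\varphi(n)$, so the task reduces to showing that $\psi(x,\chi) = O\!\bigl(x/(\log x)^2\bigr)$ for every non-principal $\chi$, with an implicit constant depending explicitly on $n$.

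For each such $\chi$ I would invoke the explicit formula $\psi(x,\chi)=-\sum_{\rho} x^{\rho}/\rho + E(x,\chi)$, with $\rho$ ranging over the non-trivial zeros of $L(s,\chi)$. Combined with an effective zero-free region of the form $\sigma>1-c_0/\log(n(|t|+2))$ and a classical density estimate for zeros close to $\operatorname{Re}(s)=1$, the standard contour-shift argument produces $\psi(x,\chi) = O\!\bigl(x\exp(-c\sqrt{\log x})\bigr)$, which dominates the target $x/(\log x)^2$ for all $x>x_0$, with $x_0$ computable from $n$.

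The hard part will be handling a potential real exceptional (Siegel) zero of a real non-principal character modulo $n$: the classical Deuring--Heilbronn / Siegel argument excludes such a zero only ineffectively, which is incompatible with the explicit constants demanded here. To keep everything effective, one must either numerically verify the non-existence of a Siegel zero for the specific modulus $n$ up to a sufficient height---this is the kind of rigorous $L$-function computation carried out by Bennett and co-authors---or absorb its hypothetical contribution into the $n$-dependent constants $c$ and $x_0$. This is precisely why the theorem is stated with constants depending on $n$ rather than uniformly in $n$, and it is the one step where the cited result supplies nontrivial computational input beyond the textbook derivation.
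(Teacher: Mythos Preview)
The paper does not supply its own proof of this statement: it is quoted as \cite[Theorem~1.3]{Bennett} and used as a black box in the proof of \Cref{prop:primes_progression}. There is therefore no in-paper argument to compare against.

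Your sketch is a reasonable high-level outline of the analytic machinery that underlies such explicit prime-counting estimates---the passage from $\pi_{n,a}$ to $\psi(x;n,a)$ via partial summation, orthogonality of Dirichlet characters, the explicit formula for $\psi(x,\chi)$, and an effective zero-free region---and you correctly identify the possible exceptional real zero as the reason the constants are stated as depending on $n$ rather than uniformly. Since $n$ is fixed, any real zero of a real character modulo $n$ is a definite numerical object that can in principle be located or excluded by rigorous computation, so effectivity is not lost; this is indeed the flavour of the work cited. For the purposes of the present paper, however, none of this is needed: the theorem is simply imported from the literature, and only the \emph{shape} of the bound (main term $\mathrm{Li}(x)/\varphi(n)$ with error $O(x/(\log x)^2)$) is used downstream.
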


Note that in the above theorem $Li(x)$ denotes the \emph{offset logarithmic integral function}, which is defined as
\[ Li(x) = \int_{2}^{x}\frac{dt}{\log t}.\]
Asymptotically, the above function behaves as the prime number counting function $\pi(x)$, that is, as $O\big(\frac{x}{\log x}\big)$.
	
\begin{proposition}\label{prop:primes_progression}
		Let $b \in \NN$ and $C\in \ZZ$. There exist an effectively computable bound $M \in \NN$ such that for all primes $p \in b\NN+1$ greater than $M$, there exists a prime $q\in b\NN+1$  with $p< q < p + \frac{p-1}{b}+C$.
\end{proposition}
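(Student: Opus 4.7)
The plan is to count the number of primes in the interval $(p,\, p + L]$ that are congruent to $1 \pmod b$, where $L := \frac{p-1}{b}+C$, and to show that this count is strictly positive for $p$ sufficiently large. Since $\gcd(1,b)=1$, Theorem~\ref{th:density} applies (in the main case $b\geq 3$), and gives us a density estimate of the form $\pi_{b,1}(x) = \frac{Li(x)}{\varphi(b)} + O\bigl(x/(\log x)^2\bigr)$ with explicit constants.

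First I would dispose of the degenerate cases $b\in\{1,2\}$: here $L\geq \frac{p-1}{2}+C$, so for $p$ large the interval $(p,p+L]$ contains $(p,3p/2)$, and Nagura's theorem supplies a prime there; if $b=2$ this prime is automatically odd, hence in $2\NN+1$. For the main case $b\geq 3$, applying Theorem~\ref{th:density} at the endpoints $x_1:=p$ and $x_2:=p+L$ and subtracting gives
\[
\pi_{b,1}(x_2) - \pi_{b,1}(x_1) \;\geq\; \frac{Li(x_2)-Li(x_1)}{\varphi(b)} \;-\; c\Bigl(\frac{x_1}{(\log x_1)^2}+\frac{x_2}{(\log x_2)^2}\Bigr),
\]
provided $x_1>x_0$, where $c$ and $x_0$ are the explicit constants from Theorem~\ref{th:density}.

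Next I would estimate the main term using $(Li)'(x)=1/\log x$: the mean value theorem yields
\[
Li(x_2)-Li(x_1) \;\geq\; \frac{L}{\log x_2} \;\geq\; \frac{(p-1)/b + C}{\log(2p)},
\]
valid once $p\geq 2|C|b$ (so that $L\geq (p-1)/(2b)$ and $x_2\leq 2p$). Hence the main term is of order $p/(b\varphi(b)\log p)$, while the error term is of order $p/(\log p)^2$. The inequality ``main term $\geq 1 + $ error term'' reduces to a concrete inequality of the form $\log p \geq K(b,C,c)$ for an explicit constant $K$; solving this inequality, together with the constraints $p>x_0$ and $p\geq 2|C|b$, yields the effectively computable bound $M$.

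The main obstacle is mostly bookkeeping: one must track that every constant remains effective in $b$, and that the choice of $M$ simultaneously makes the error term negligible and keeps $L$ comparable to $p/b$ (which is where the assumption on the length of the interval is used). Note that the resulting count in fact tends to infinity with $p$, so the proposition could be strengthened to ``arbitrarily many primes'' with no extra effort — but a single prime is all that is needed for the downstream construction of the sequence $\langle p_i\rangle_{i=0}^\infty$.
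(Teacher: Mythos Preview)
Your proposal is correct and follows essentially the same approach as the paper: subtract the estimate of Theorem~\ref{th:density} at the two endpoints and observe that the main term $\sim \frac{p}{b\varphi(b)\log p}$ dominates the error term $\sim \frac{p}{(\log p)^2}$. If anything you are more careful than the paper, which argues purely asymptotically, drops the constant $C$, and does not separately treat the cases $b\in\{1,2\}$ that fall outside the hypothesis $n\geq 3$ of Theorem~\ref{th:density}.
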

	
\begin{proof}
Let $x \in b\NN + 1$, and denote by $y = x (1 + \frac{1}{b})$. We would like to show that there exists a bound $M \in \NN$ such that $\pi_{b,1}(y) - \pi_{b,1}(x) > 0$ for all $x > M$. Using \Cref{th:density} to estimate $\pi_{b,1}(y)$ and $\pi_{b,1}(x)$, it suffices to show that:

\[ \frac{Li(y)}{\varphi(b)} - c\frac{y}{(\log y)^2} >  \frac{Li(x)}{\varphi(b)} + c'\frac{x}{(\log x)^2}  
\]
The above then simplifies to:

\begin{equation} \label{eq:asymp}
	\frac{Li(y) - Li(x)}{\varphi(b)} >  2c\frac{y}{(\log y)^2}
\end{equation}
Now write $y = x(1 + \epsilon)$ with $\epsilon = \frac{1}{b}$, and observe that 
\[ Li(y) - Li(x) = \int_{x}^{y} \frac{dt}{\log t} = \frac{y}{\log y} - \frac{x}{\log x} \sim \frac{\epsilon x}{\log x} \]
where $\sim$ denotes asymptotic equivalence.
As for the right hand side of \eqref{eq:asymp}, note that
\[ \frac{y}{(\log y)^2} \sim \frac{x}{(\log x)^2}. \]
It remains to note that $\frac{x}{\log x} \gg \frac{x}{(\log x)^2}$.
\end{proof}

The above proposition ensures that we can construct an infinite sequence of primes $\langle p_i\rangle_{i = 0}^{\infty}$ in $b\NN + 1$ such that for all $p_i$ its successor $p_{i+1}$ is between $p_i$ and $p_i + \frac{p_i-1}{b} + C$.~\Cref{prop:constructqfromp} then implies that if we choose 
$p_0 > \max(M,N_r)$,
for every $\gamma,\gamma' \in \As \uplus\Bs$ such that 
$\overline{\rm (\gamma,\gamma')}$ is  
an $r$-expanding $r$-unbalanced family of intervals,  for all $i$, the intervals $ \{n\in\NN : \rem_{p_i}(\gamma)\leq n <\rem_{p_i}(\gamma')\}$ and $\{ n \in \NN : \rem_{p_{i+1}}(\gamma) \leq n < \rem_{p_{i+1}}(\gamma')\}$ are contiguous. 

The sequence $\langle p_i\rangle_{i = 0}^{\infty}$ is the last part of our construction, allowing us to prove our main result:

\begin{theorem}\label{th:main_result}
	The membership problem for hypergeometric sequences \emph{with rational parameters} is decidable.
\end{theorem}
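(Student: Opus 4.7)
The plan is to construct an effectively computable bound $N$ such that $u_n \neq t$ for all $n > N$, thereby reducing MP to a finite search over $\{u_0, \ldots, u_N\}$. First, following Section~\ref{subsec:shiftquo}, I would handle all easy cases: when the shift quotient $r(x)$ either diverges to $\pm\infty$ or converges to some $\ell \in \QQ$ with $|\ell|\neq 1$, growth considerations on $|u_n|$ already yield a direct bound. So assume henceforth that $r(x) \to \pm 1$. I would then apply Proposition~\ref{prop:reduction}: if $\langle u_n \rangle_{n=0}^\infty$ coincides with a rational function $f(n)/g(n)$ with $f,g \in \QQ[x]$, then membership reduces to integer root-finding for $f(x) - t \cdot g(x)$, which is decidable. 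Otherwise, I fix $\gamma, \gamma' \in \As \uplus \Bs$ with $\gamma \prec_r \gamma'$ such that the family $\overline{(\gamma, \gamma')}$ is both $r$-unbalanced and $r$-expanding.

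Next I would build the infinite sequence of primes covering the relevant range of indices. Let $C$ be the constant provided by Proposition~\ref{prop:constructqfromp} for $\gamma, \gamma'$, and let $M$ be the bound from Proposition~\ref{prop:primes_progression} for $b$ and $C$. Select a prime $p_0 \in b\NN + 1$ that exceeds $\max(M, N_r)$, is coprime to $t \cdot u_0$, and is large enough that for every $k \leq p_0 - 1$ and every fraction $\frac{a}{b}$ appearing (in reduced form) in $\As \uplus \Bs$, the bound $0 < |kb - a| < p_0^2$ holds. Iteratively pick primes $p_{i+1} \in b\NN + 1$ with $p_i < p_{i+1} < p_i + (p_i - 1)/b + C$, which exist by Proposition~\ref{prop:primes_progression}. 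Define
\[
  I_i := \{\, n \in \NN : \rem_{p_i}(\gamma) \leq n < \rem_{p_i}(\gamma') \,\}.
\]
Proposition~\ref{prop:constructqfromp} then gives $\rem_{p_{i+1}}(\gamma) \leq \rem_{p_i}(\gamma')$, so the intervals $I_i$ are contiguous; moreover, since $\rem_{p_i}(\gamma')$ grows linearly in $p_i$ by Proposition~\ref{prop:representation}, the union $\bigcup_i I_i$ covers $\{n \in \NN : n \geq \rem_{p_0}(\gamma)\}$. Set $N := \rem_{p_0}(\gamma) - 1$.

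The key estimate is then immediate. For any $n > N$, pick $i$ with $n \in I_i$. By the choice of $p_0$ (and the fact that $p_{i+1} < 2 p_i$ makes the valuation bound persist for all $i$), Equation~\eqref{eq:vprm} forces every summand in the expansion~\eqref{eq:summ} of $S_{p_i}(n)$ to lie in $\{0,1\}$; the $r$-unbalancedness of $\overline{(\gamma, \gamma')}$ combined with Equation~\eqref{eq:ineqSn} then yields $S_{p_i}(n) \neq 0$. Hence $v_{p_i}(u_n) = v_{p_i}(u_0) + S_{p_i}(n) = S_{p_i}(n) \neq 0$, and since $p_i$ does not divide $t$ we conclude $v_{p_i}(t) = 0 \neq v_{p_i}(u_n)$, so $u_n \neq t$. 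Membership therefore reduces to checking $t \in \{u_0, \ldots, u_N\}$.

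The main obstacle is juggling the four simultaneous constraints on $p_0$: (i) $p_0 > N_r$ so that $\prec_{p_j}$ agrees with $\prec_r$ for every $j \geq 0$; (ii) $p_0 > M$ so Proposition~\ref{prop:primes_progression} iterates indefinitely; (iii) $p_0$ coprime to $t u_0$; and (iv) the valuation condition $0 < |kb - a| < p_i^2$ on the whole sequence. Each constraint rules out at most finitely many primes in $b\NN + 1$ and is effective, so an admissible $p_0$ can be located by search, and with $p_0$ fixed the bound $N$ is computable, closing the argument.
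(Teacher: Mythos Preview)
Your proposal is correct and follows essentially the same architecture as the paper's proof: dispose of the non-critical limits of $r(x)$, invoke Proposition~\ref{prop:reduction} to split off the rational-function case, then in the remaining case fix an $r$-expanding $r$-unbalanced family $\overline{(\gamma,\gamma')}$, choose a large starting prime $p_0\in b\NN+1$, and use Propositions~\ref{prop:primes_progression} and~\ref{prop:constructqfromp} to chain contiguous unbalanced intervals covering all $n\geq\rem_{p_0}(\gamma)$. One small slip: the parenthetical ``$p_{i+1}<2p_i$ makes the valuation bound persist'' is both not quite what Proposition~\ref{prop:primes_progression} gives (you only get $p_{i+1}<p_i(1+\tfrac{1}{b})+C$) and not the correct reason the bound $0<|kb-a|<p_i^2$ propagates---the real reason is simply that the sequence $\langle p_i\rangle$ is increasing and the inequality $|kb-a|\leq (p-1)b+|a|<p^2$ holds for every prime $p$ above a fixed threshold depending only on $b$ and the roots, so once $p_0$ clears that threshold all later $p_i$ do too. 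This is cosmetic and does not affect the argument.
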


\begin{proof}
As discussed in Section~\ref{subsec:shiftquo}, the only case of MP that is not trivially decidable is when the shift quotient $r(x)\in \QQ(x)$ of the sequence $\langle u_n \rangle_{n=0}^\infty$ converges to $\pm 1$ as $x\to\infty$.
Given such an instance of MP, we write $r(x)$ as
\[\frac{(x-\alpha_1)\cdots (x-\alpha_d)}{(x-\beta_1)\cdots (x-\beta_d)}.\]
We denote by $\As$ the multiset $\{\alpha_1,\ldots,\alpha_d\}$ consisting of all the (possibly repeated) roots of the numerator and by $\Bs$ the multiset $\{\beta_1,\ldots,\beta_d\}$ of the roots of the denominator.
As discussed in \Cref{sec:mp}, all elements in $\As \uplus \Bs$ are in $\QQ \setminus \ZZ_{\geq 0}$. 
 
We now show that there exists a bound $N\in \NN$ such that for all $n> N$, there exists a prime $p$ appearing in the factorisation of $u_n$ 
but  not in the factorisation of $u_0$ or that of $t$. 
This implies that it suffices to check whether $u_n = t$ for all $n \in \{1,\ldots,N\}$ to decide MP. In particular, we show that for all $n > N$, we can find a prime $p$ with 
$v_p(t) =v_p(u_0) = 0$
such that $ S_p (n)$, as defined in \Cref{eq:summ}, is non-zero.

Write $u_0 = \frac{v}{v'}$ and $t = \frac{w}{w'}$. Now define
\[ N' = \max(|v|,|v'|,|w|, |w'|,N_r)\]
where $N_r$ is defined as in \Cref{eq:nr}. Note that
 none of the primes $p > N'$ divide the target~$t$ nor the initial value $u_0$.
	
Following \Cref{prop:reduction}, we can assume without loss of generality that there exists an $r$-expanding $r$-unbalanced family of intervals $\overline{\rm (\gamma,\gamma')}$ for some $\gamma,\gamma'\in\As \uplus \Bs$. Let $M$ be the bound computed in \Cref{prop:primes_progression}. Let $p_0 \in b\NN+1$ be a prime with $p_0 > \max(M,N')$.
 Observe that for all $n$ in
\[ \{k \in \NN : \rem_{p_0}(\gamma) \leq k < \rem_{p_0}(\gamma')\},\]
 the sum $S_{p_0}(n)$ defined in \Cref{eq:summ} is indeed non-zero. 

Following \Cref{prop:primes_progression}, we can construct an infinite sequence of primes $\langle p_i\rangle_{i = 0}^{\infty}$ with initial element $p_0$ such that for every prime $p_i$ in the sequence, its successor $p_{i+1}$ is between $p_i$ and $p_i + \frac{p_i-1}{b}$ in the arithmetic progression $b\NN + 1$. By \Cref{prop:constructqfromp}, for all $i$, the intervals $ \{k \in \NN : \rem_{p_i}(\gamma) \leq k < \rem_{p_i}(\gamma')\}$ and $ \{k \in \NN :\rem_{p_{i+1}}(\gamma) \leq k < \rem_{p_{i+1}}(\gamma')\}$ are contiguous.
Therefore we can cover all $n>\rem_{p_0}(\gamma)$, which concludes our proof.
	
\end{proof}

\section{Discussion} \label{sec:discussion}

Our main result shows decidability of the Membership Problem for hypergeometric sequences, where the coefficient polynomials have rational roots. In extending our results 
to the general case, that is when the coefficient polynomials have algebraic roots, it appears that the difficulty is when the splitting fields of the coefficient polynomials are identical. Otherwise, there exists a prime~$p$ such that one of 
polynomials splits over $\QQ_p$ but the other does not.  Such a prime can be used to deduce an  upper bound on the largest index of the target value.

\SkipTocEntry\section*{Acknowledgements}
  A. Pouly is supported by the European Research Council (ERC) under the 
  European Union's Horizon 2020 research and innovation programme (Grant 
  agreement No. 852769, ARiAT).

\bibliographystyle{plain}
\bibliography{literature}

\appendix

\section{Membership of zero and assumption on~$q(x)$}
\label{appendix:zero}

In our MP instances, we will assume that the numerator~$q(x)$ of the shift quotient
 has no non-negative integer zeros and $t\neq 0$.
This  assumption on $q(x)$ is without loss of generality as otherwise, the sequence $\langle u_n \rangle_{n=0}^{\infty}$ will be ultimately always zero.
Indeed, if $q(x)$ has non-negative integer zeros, $u_n = 0$ for all $n \geq m$ where $m$ is the smallest  non-negative integer root of~$q(x)$.
Consequently, the search domain for indices  of~$t$ in MP will be limited to the finite set $\{u_0,\ldots, u_m\}$.
This assumption on~$q(x)$ will exclude the membership of zero in  the sequence~$\langle u_n \rangle_{n=0}^{\infty}$ , and will allow us to further assume that $t\neq 0$.

\section{Proof of Proposition~\ref{prop:reduction}}
\label{appendix:reduction}


Recall we can write the shift quotient $r(x)$ as
\[
	\frac{(x-\alpha_1)\cdots (x-\alpha_d)}{(x-\beta_1)\cdots (x-\beta_d)}
\]
where the $\alpha_i$ and the $\beta_i$ are in  $\QQ \setminus \ZZ_{\geq 0}$. 
We denote by $\As$ the multiset $\{\alpha_1,\ldots,\alpha_d\}$ consisting of all the (possibly repeated) roots of the numerator and by $\Bs$ the multiset $\{\beta_1,\ldots,\beta_d\}$ of the roots of the denominator.

Our proof relies on the following observation: if there exists a bijective function $f : \As \to \Bs$ such that for all $\alpha\in\As$ we have $\alpha - f(\alpha) \in \ZZ$, then Item~2 holds. Otherwise, Item~1 holds.

Towards Item~2, assume that there exists a bijective function $f : \As \to \Bs$ such that for all $\alpha\in\As$ we have $\alpha - f(\alpha) \in \ZZ$. Given an enumeration $\alpha_1,\ldots,\alpha_d$ on $\As$, enumerate the elements of $\Bs$ so that $f(\alpha_i)=\beta_i$. Then given a pair $\alpha_i,\beta_i$, write $\ell_i=|\alpha_i - \beta_i|\in\NN$. 

Recall that given an instance of MP with a target value $t\in\QQ$ and initial value $u_0\in\QQ$, the problem asks to decide whether there exists $n\in\NN$ such that
\begin{equation}\label{eq:mp_reduction}
	u_0 \prod_{k=1}^{n} r(k)  = t.
\end{equation}
Observe that we can expand a part of the above product in the following way:
\begin{align*}
	\prod_{k=1}^{n} r(k)
	&= \prod_{k=1}^{n} \frac{(k-\alpha_1)\cdots(k-\alpha_d)}{(k-\beta_1)\cdots(k-\beta_d)} \\
	&= \prod_{k=1}^{n} \frac{(k-\alpha_1)}{(k-\beta_1)} \cdots \prod_{k=1}^{n} \frac{(k-\alpha_d)}{(k-\beta_d)}
\end{align*}

Now if we look at the product $\prod_{k=1}^{n} \frac{(k-\alpha_i)}{(k-\beta_i)}$. If $\beta_i>\alpha_i$, observe that we can write 
\[
\frac{k-\alpha_i}{k-\beta_i}=\frac{k+ (\beta_i - \alpha_i)-\beta_i}{k-\beta_i}= \frac{k + \ell_i - \beta_i}{k- \beta_i}.
\]

We can thus write:
\begin{align*}
	&\prod_{k=1}^{n} \frac{(k-\alpha_i)}{(k-\beta_i)}
	\\
	&=\prod_{k=1}^{n} \frac{(k+\ell_i-\beta_i)}{(k-\beta_i)} 
	\\
	&= \frac{(1+\ell_i-\beta_i)}{(1-\beta_i)} \cdots 
	\frac{((\ell_i+1)+\ell_i-\beta_i)}{((\ell_i+1)-\beta_i)} \cdots
	\frac{(n+\ell_i-\beta_i)}{(n-\beta_i)}
	\\
	&= \frac{\cancel{(1+\ell_i-\beta_i)}}{1-\beta_i)} \cdots 
	\frac{((\ell_i+1)+\ell_i-\beta_i)}{\cancel{((\ell_i+1)-\beta_i)}} \cdots
	\frac{(n+\ell_i-\beta_i)}{(n-\beta_i)}
\end{align*}

Note how in the last line above, we were able to simplify at least two terms. Observe that we will be able to do so for all but $\ell_i$ terms in the numerator and all but $\ell_i$ terms in the denominator. That is, we transform:
\begin{align*}
	\prod_{k=1}^{n} \frac{(k-\alpha_i)}{(k-\beta_i)}
	&= \frac{(n+1-\beta_i)\cdots (n+\ell_i-\beta_i)}{(1-\beta_i)\cdots (\ell_i-\beta_i)} 
	= \frac{f_i(n)}{g_i(n)}
\end{align*}
for $n>\ell_i$.

Similarly, observe that if $\alpha_i>\beta_i$, we can write 
\[
\frac{k-\alpha_i}{k-\beta_i}=\frac{k-\alpha_i}{k+(\alpha_i - \beta_i)-\alpha_i}= \frac{k - \alpha_i}{k+\ell_i-\alpha_i}
\]
By repeating the above computation, we obtain:
\begin{align*}
	\prod_{k=1}^{n} \frac{(k-\alpha_i)}{(k-\beta_i)} &= \frac{(1-\alpha_i)\cdots (\ell_i-\alpha_i)}{(n+1-\alpha_i)\cdots (n+\ell_i-\alpha_i)}
	= \frac{f_i(n)}{g_i(n)}
\end{align*}
for $n>\ell_i$.

By applying the above transformation to all pairs $\alpha_i,\beta_i$ with their respective distance $\ell_i$, we can rewrite $\prod_{k=1}^{n} r(k)$ as:
\begin{align*}
	\prod_{k=1}^{n}r(k) = \frac{f_1(n)}{g_1(n)} \ldots \frac{f_d(n)}{g_d(n)} = \frac{\hat{f}(n)}{\hat{g}(n)}
\end{align*}
The above is well-defined for all $n>\max(\ell_1,\ldots,\ell_d)$, and implies that the product $\prod_{k=1}^{n}r(k)$ can be decomposed as a product of a constant number of rational functions in $n$.

To show Item~1, assume that there is no  bijective function $f : \As \to \Bs$ such that for all $\alpha\in\As$ we have $\alpha - f(\alpha) \in \ZZ$.
Let $\gamma_{1},\ldots,$ $\gamma_{2d}$ be a fixed permutation of the elements of  $\As \uplus \Bs$, such that $\gamma_{j} \preceq_r \gamma_{k}$ for all $1 \leq j < k \leq 2d$.

We claim that there exists an index $j$ such that $\overline{\rm (\gamma_j,\gamma_{j+1})}$ is an $r$-expanding $r$-unbalanced family of intervals.
 That is, $j$ is such that
\begin{itemize}
	\item $\gamma_j - \gamma_{j+1}\notin \ZZ$, and
	\item the number of $\alpha_i$ in the block $\gamma_1 \preceq_r \ldots \preceq_r \gamma_{j}$ is not equal to the number  of $\beta_i$ in the block $\gamma_1 \preceq_r \ldots \preceq_r \gamma_{j}$.
\end{itemize}
Indeed, if there is no such $j$, then we can take each block of the $\alpha_i$ and $\beta_i$ with integer distances and construct a bijection mapping from the set of $\alpha_i$'s to the set of $\beta_i$'s appearing in the block alone. Putting together the mappings given by the block bijections gives us a bijection $f : \As \to \Bs$ such that $\alpha_i - f(\alpha_i) \in \ZZ$ for all $i$, which would lead to a contradiction.

%
%
%

\end{document}